\numberwithin{equation}{section}
\numberwithin{figure}{section}
\theoremstyle{plain}
\newtheorem{thm}{\protect\theoremname}
  \theoremstyle{plain}
  \newtheorem{cor}[thm]{\protect\corollaryname}
  \theoremstyle{plain}
  \newtheorem{prop}[thm]{\protect\propositionname}
  \theoremstyle{remark}
  \newtheorem{rem}[thm]{\protect\remarkname}
  \theoremstyle{plain}
  \newtheorem{lem}[thm]{\protect\lemmaname}
\DeclareMathOperator*{\argmax}{arg\,max}
  \providecommand{\corollaryname}{Corollary}
  \providecommand{\lemmaname}{Lemma}
  \providecommand{\propositionname}{Proposition}
  \providecommand{\remarkname}{Remark}
\providecommand{\theoremname}{Theorem}
\begin{document}
\global\long\def\ba{\bm{\alpha}}
 \global\long\def\bt{\bm{\theta}}
\global\long\def\bmu{\bm{\mu}}
 \global\long\def\bl{\bm{\lambda}}
 \global\long\def\be{\bm{\eta}}
 \global\long\def\E{\mathbb{E}}
 \global\long\def\by{\mathbf{y}}
 \global\long\def\ys{\mathbf{y}^{\star}}
 \global\long\def\bS{\bm{\Sigma}}
 \global\long\def\N{\mathcal{N}}
 \global\long\def\I{\mathbb{I}}
 \global\long\def\R{\mathbb{R}}
 \global\long\def\bb{\bm{\beta}}
\global\long\def\O{\mathcal{O}}
\global\long\def\Rc{\mathcal{R}}
\global\long\def\Vc{\mathcal{V}}
\global\long\def\bG{\bm{\Gamma}}
\global\long\def\bD{\bm{\Delta}}
\global\long\def\bn{\bm{\nu}}
\global\long\def\L{\mathcal{L}}
\global\long\def\Lt{\mathcal{L}\left(\bt\right)}
\global\long\def\M{\mathcal{M}}
\global\long\def\Mt{\mathcal{M}\left(\bt,\bn\right)}
\global\long\def\Mtn{\mathcal{M}\left(\bt,\nu\right)}
\global\long\def\bts{\bm{\theta}^{\star}}
\global\long\def\logitinv{\mathrm{logit}^{-1}}
\global\long\def\H{\mathcal{H}}
\global\long\def\bxi{\bm{\xi}}

\title{The Poisson transform for unnormalised statistical models}

\author{Simon Barthelmé, Nicolas Chopin}
\maketitle
\begin{abstract}
Contrary to standard statistical models, \emph{unnormalised }statistical
models only specify the likelihood function up to a constant. While
such models are natural and popular, the lack of normalisation makes
inference much more difficult. Extending classical results on the
multinomial-Poisson transform \citep{Baker:MultinomialPoissonTransform},
we show that inferring the parameters of a unnormalised model on a
space $\Omega$ can be mapped onto an equivalent problem of estimating
the intensity of a Poisson point process on $\Omega$. The unnormalised
statistical model now specifies an \emph{intensity function} that
does not need to be normalised. Effectively, the normalisation constant
may now be inferred as just another parameter, at no loss of information.
The result can be extended to cover non-IID models, which includes
for example unnormalised models for sequences of graphs (dynamical
graphs), or for sequences of binary vectors. As a consequence, we
prove that unnormalised parameteric inference in non-IID models can
be turned into a semi-parametric estimation problem. Moreover, we
show that the \emph{noise-contrastive estimation }method of \citet{GutmannHyvarinen:NoiseContrastEstUnStatMod}
can be understood as an approximation of the Poisson transform, and
extended to non-IID settings. We use our results to fit spatial Markov
chain models of eye movements, where the Poisson transform allows
us to turn a highly non-standard model into vanilla semi-parametric
logistic regression. 
\end{abstract}
Unnormalised statistical models are a core tool in modern machine
learning, especially deep learning \citep{salakhutdinov2009deep},
computer vision (\emph{Markov random fields}, \citealp{Wang:MRFModelingInferenceLearning})
and statistics for point processes \citep{gu2001maximum}, network
models \citep{caimo2011bayesian}, directional data \citep{walker2011posterior}.
They appear naturally whenever one can best describe data as having
to conform to certain features: we may then define an energy function
that measures how well the data conform to these constraints. While
this way of formulating statistical models is extremely general and
useful, immense technical difficulties may arise whenever the energy
function involves some unknown parameters which have to be estimated
from data. The reason is that the normalisation constant (which ensures
that the distribution integrates to one) is in most cases impossible
to compute. This prevents direct application of classical methods
of maximum likelihood or Bayesian inference, which all depend on the
unknown normalisation constant.

Many techniques have been developed in recent years for such problems,
including contrastive divergence \citep{Hinton:TrainingProductExpertsContrastDiv,BengioDelalleau:JustifyingGeneralizingContrDiv},
noise-contrastive estimation \citep{GutmannHyvarinen:NoiseContrastEstUnStatMod}
and various forms of MCMC for Bayesian inference \citep{Moller:EfficientMCMCforDistrIntractNormConstants,Murray:MCMCDoublyIntractableDistr,Girolami:PlayingRussianRoulette}.
The difficulty is compounded when unnormalised models are used for
non-IID data, either sequential data, or data that include covariates.
If the data form a sequence of length $n$, there are now $n$ normalisation
constants to approximate. In our application we look at models of
spatial Markov chains, where the transition density of the chain is
specified up to a normalisation constant, and again one normalisation
constant needs to be estimated per observation. 

In the first Section, we show that unnormalised estimation is tightly
related to the estimation of point process intensities, and formulate
a \emph{Poisson transform }that maps the log-likelihood of a model
$\Lt$ into an equivalent cost function $\Mt$ defined in an expanded
space, where the latent variables $\bn$ effectively estimate the
normalisation constants. In the case of non-IID unnormalised models
we show further that optimisation of $\Mt$ can be turned into a semi-parametric
problem and adressed using standard kernel methods. In the second
section, we show that the \emph{noise-contrastive divergenc}e\emph{
}described in of \citet{GutmannHyvarinen:NoiseContrastEstUnStatMod}
arises naturally as a tractable approximation of the Poisson transform,
and that this new interpretation lets us extend its use to non-IID
models. (\citet{GutmannHyvarinen:NoiseContrastEstUnStatMod} call
the technique ``noise-contrastive estimation'', but we use the term
noise-contrastive \emph{divergence }to designate the corresponding
cost function.) Finally, we apply these results to a class of unnormalised
spatial Markov chains that are natural descriptions of eye movement
sequences.

\section{Relationship to prior work }

Some of the ideas we use here have appeared under different forms
in classical statistics, machine learning and spatial statistics.
The Poisson transform generalises the multinomial-Poisson transform
developed by \citet{Baker:MultinomialPoissonTransform}. It is also
a special case of a general family of Bregman divergences introduced
by \citet{GutmannHirayama:BregmanDivergenceGeneralFramework}, a special
case of another family by \citet{Pihlaja:AFamilyCompEffSimpleEst},
and finally can also be viewed as an empirical version of the generalised
Kullback-Leibler divergence for unnormalised measures \citep{Minka:DivMeasuresMP}. 

Noise-contrastive learning is studied in \citet{GutmannHirayama:BregmanDivergenceGeneralFramework},
although the relationship between logistic regression and estimation
has been noted in other places (for example, in the spatial statistics
literature, see \citealp{Baddeley:SpatialLogisticRegAndChangeOfSupport},
\citealp{Baddeley:LogisticRegSpatialGibbsPointProc}). We go further
here in showing that the divergence defined by NCL converges uniformly
to the Poisson transform, giving it a new interpretation as an approximate
likelihood rather than just a divergence. 

\citet{MnihKavukcuoglu:LearningWordEmbeddings} and \citet{MnihTeh2012:FastSimpleAlgorithmTrainingNeuralProb}
use the NCL technique in a class of non-IID unnormalised models. However,
in the interest of computation time, they ignore normalisation constants.
The results given here indicate clearly that neglecting normalisation
constants leads in the general case to non-convergent estimators,
as illustrated in Section \ref{sub:A-toy-example}. Instead we develop
a semi-parametric framework for non-IID estimation, which is both
much faster than purely parametric techniques, as well as convergent.

\section{The Poisson transform \label{sec:The-Poisson-transform}}

In this section we show how unnormalised likelihoods can be turned
into Poisson process likelihoods at no loss of information. We call
the procedure the Poisson transform, as it generalises the Poisson-multinomial
transform \citep{Baker:MultinomialPoissonTransform}. We give two
interpretations, one in terms of upper-bound maximisation, and one
in terms of generalised KL divergences. We begin with the IID case,
with the generalisation to non-IID data treated further into the text.

\subsection{Background on Poisson point processes\label{sub:Background-on-PP}}

Poisson point processes are described at length in \citet{Kingman:PoissonProcesses},
and we only give here the merest outline. A \emph{Inhomogeneous Poisson
point process }(IPP) with intensity function $\lambda\left(\mathbf{y}\right)\geq0$
over space $\Omega$ defines a distribution over the set of countable
subsets $\mathcal{S}$ of $\Omega$, in such a way that, for any measurable
subset $\mathcal{A}\subseteq\Omega$,
\[
\#\left\{ \mathcal{S}\cap\mathcal{A}\right\} \sim\mbox{Poi}\left(\lambda_{\mathcal{A}}\right),\quad\lambda_{\mathcal{A}}=\int_{\mathcal{A}}\lambda\left(\mathbf{y}\right)\mbox{d}\mathbf{y},
\]
assuming $\lambda_{\mathcal{A}}<+\infty$. In words, the number of
points to be found in subset $\mathcal{A}$ has a Poisson distribution,
with expectation given by the integral of the intensity function within
$\mathcal{A}$; in discrete spaces the integral may of course be interpreted
as a sum. In particular, provided $\int\lambda\left(\mathbf{y}\right)\mbox{d}\mathbf{y}<+\infty$,
the cardinal $n$ of $\mathcal{S}$ is finite, and has a Poisson distribution
with expectation equal to the integral of $\lambda\left(\mathbf{y}\right)$
over the domain (the fact follows from taking $\mathcal{A}=\Omega$).
Assuming again $\int_{\Omega}\lambda\left(\mathbf{y}\right)\mbox{d}\mathbf{y}<+\infty$,
the log-likelihood of observing set $\mathcal{S}$ given the intensity
function $\mathbf{\lambda}$ is given by:
\begin{equation}
\log p\left(\mathcal{S}|\lambda\right)=\sum_{\mathbf{y}_{i}\in S}\log\lambda\left(\mathbf{y}_{i}\right)-\int_{\Omega}\lambda\left(\mathbf{y}\right)\mbox{d}\mathbf{y}.\label{eq:log-lik-Poisson}
\end{equation}

\subsection{The Poisson transform in the IID case\label{sub:Poisson-transform-IID}}

The Poisson transform is simply stated: when we have $n$ observations
from an unnormalised model on $\Omega$, we may treat them as the
realisation of a certain point process at no loss of information.
This results in a mapping from a likelihood function $\Lt$ to another,
which we note $\Mtn$, in an expanded space. $\Mtn$ has the same
global maximum as $\Lt$ and confidence intervals are preserved. 

First, the log-likelihood function for $n$ IID observations $\mathbf{y}_{i}$
from an unnormalised model $p(\mathbf{y}|\bt)\propto\exp\left\{ f_{\bt}(\mathbf{y})\right\} $
can be written as:
\begin{equation}
\mathcal{L}(\bt)=\sum_{i=1}^{n}f_{\bt}(\mathbf{y}_{i})-n\log\left(\int_{\Omega}\mbox{exp}\left\{ f_{\bt}\left(\mathbf{y}\right)\right\} \mbox{d}\mathbf{y}\right)\label{eq:standard-log-likelihood}
\end{equation}
and the ML estimate of $\bt$ is the maximum of $\Lt$. We introduce
the following alternative likelihood function: 
\begin{equation}
\Mtn=\sum_{i=1}^{n}\left\{ f_{\bt}(\mathbf{y}_{i})+\nu\right\} -n\int_{\Omega}\mbox{exp}\left\{ f_{\bt}(\mathbf{y})+\nu\right\} \mbox{d}\mathbf{y}\label{eq:poisson-transformed-loglik}
\end{equation}
which by \eqref{eq:log-lik-Poisson} is, up to additive constant $n\log(n)$,
the IPP likelihood on $\Omega$ for intensity function 
\[
\lambda\left(\mathbf{y}\right)=\mbox{exp}\left\{ f_{\bt}(\mathbf{y})+\nu+\log(n)\right\} .
\]
Our first theorem shows that maximum likelihood estimation of $\bt$
via $\Lt$ or via $\Mtn$ is equivalent. 
\begin{thm}
The set of points $\bts$ such that $\bts\in\underset{\bt\in\Theta}{\argmax}\,\Lt$
matches the set of points $\tilde{\bt}$ such that $(\tilde{\bt},\tilde{\nu})\in\underset{\bt\in\Theta,\nu\in\R}{\argmax}\,\Mtn$
for some $\tilde{\nu}$. In particular, if $\underset{\bt\in\Theta}{\argmax}\,\Lt$
is a singleton, then so is $\underset{\bt\in\Theta,\nu\in\R}{\argmax}\,\Mtn$.\end{thm}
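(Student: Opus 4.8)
The plan is to ``profile out'' the scalar $\nu$: for each fixed $\bt$ I would maximise $\Mtn$ over $\nu\in\R$ in closed form, and then check that the resulting profile objective equals $\Lt$ up to a constant independent of both $\bt$ and $\nu$. The theorem then reduces to the elementary fact that a joint maximum coincides with the iterated maximum, together with uniqueness of the inner maximiser in $\nu$.

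Concretely, write $Z(\bt)=\int_{\Omega}\exp\{f_{\bt}(\mathbf{y})\}\,\mbox{d}\mathbf{y}$, which we take to satisfy $0<Z(\bt)<+\infty$ on $\Theta$ (finiteness is already implicit for $\Lt$ in \eqref{eq:standard-log-likelihood} to be well defined, and $Z(\bt)=0$ occurs only in the degenerate case $f_{\bt}\equiv-\infty$, which we exclude). Then $\Mtn=\sum_{i=1}^{n}f_{\bt}(\mathbf{y}_{i})+n\nu-ne^{\nu}Z(\bt)$ is, as a function of $\nu$, strictly concave (the term $-ne^{\nu}Z(\bt)$ is strictly concave, the remainder affine) and diverges to $-\infty$ as $\nu\to\pm\infty$; hence it has the unique maximiser $\nu^{\star}(\bt):=-\log Z(\bt)$, obtained from $\partial_{\nu}\Mtn=n-ne^{\nu}Z(\bt)=0$. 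Substituting back,
\[
\sup_{\nu\in\R}\Mtn=\sum_{i=1}^{n}f_{\bt}(\mathbf{y}_{i})-n\log Z(\bt)-n=\Lt-n,
\]
so the profile likelihood is exactly $\Lt$ shifted by the constant $-n$.

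It then follows that $\max_{\bt\in\Theta,\nu\in\R}\Mtn=\max_{\bt\in\Theta}\Lt-n$, and the set equality is obtained by checking both inclusions: if $(\tilde{\bt},\tilde{\nu})\in\argmax\Mtn$ then $\tilde{\nu}$ must maximise $\M(\tilde{\bt},\cdot)$, so $\M(\tilde{\bt},\tilde{\nu})=\mathcal{L}(\tilde{\bt})-n$ attains the global value $\max_{\bt}\Lt-n$, forcing $\tilde{\bt}\in\argmax\Lt$; conversely, if $\tilde{\bt}\in\argmax\Lt$, then $(\tilde{\bt},\nu^{\star}(\tilde{\bt}))$ achieves $\max_{\bt}\Lt-n\ge\Mtn$ (valid for all $(\bt,\nu)$ since $\Mtn\le\Lt-n$), hence lies in $\argmax\Mtn$. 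Because $\nu^{\star}(\cdot)$ is single-valued, a singleton $\argmax\Lt=\{\bts\}$ yields the singleton $\argmax\Mtn=\{(\bts,\nu^{\star}(\bts))\}$.

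I do not anticipate a substantive obstacle: the only care required is the mild regularity on $Z(\bt)$ that makes the inner concave maximisation well posed, and keeping track of the additive $-n$ produced by carrying out that maximisation exactly. (The further constant $n\log n$ relating $\Mtn$ to a genuine IPP log-likelihood, noted after \eqref{eq:poisson-transformed-loglik}, is irrelevant here, being free of $\bt$ and $\nu$.)
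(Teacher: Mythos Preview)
Your proposal is correct and follows exactly the paper's approach: profile out $\nu$ by noting that $\Mtn$ has the unique maximiser $\nu^{\star}(\bt)=-\log\int_{\Omega}\exp\{f_{\bt}(\mathbf{y})\}\,\mbox{d}\mathbf{y}$, and that $\mathcal{M}(\bt,\nu^{\star}(\bt))=\Lt-n$. Your write-up is in fact more detailed than the paper's, which compresses the entire argument into that single observation.
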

\begin{proof}
For a fixed $\bt$, $\Mtn$ admits a unique maximum in $\nu$ at $\nu^{\star}(\bt)=-\log\int_{\Omega}\mbox{exp}\left\{ f_{\bt}(\mathbf{y})\right\} \mbox{d}\mathbf{y}$,
hence $\Mtn\leq\mathcal{M}(\bt,\nu^{\star}(\bt))=\Lt-n$.
\end{proof}
There are several remarks to make at this stage. First, since $\nu^{\star}(\bt)=-\log\int_{\Omega}\mbox{exp}\left\{ f_{\bt}(\mathbf{y})\right\} \mbox{d}\mathbf{y}$,
maximising $\mathcal{M}(\bt,\nu)$ can be interpreted as estimating
the normalisation constant along with the parameters. There is no
estimation cost incurred in treating the normalisation constant as
a free parameter, since the global maxima of $\Lt$ and $\Mtn$ are
the same.

Second, the usual way of computing confidence intervals for $\bt$
is to invert the Hessian of $\Lt$ at the mode. We show in the Appendix
that the same confidence intervals can be obtained from the Hessian
of $\Mtn$ at the mode, so that the Poisson transform does not introduce
any over or under-confidence. In addition, the Poisson-transformed
likelihood can be used for \emph{penalised }likelihood maximisation
(see Application), does not introduce any spurious maxima, and in
exponential families it can even be shown to preserve concavity (see
Appendix). 

Third, at this point we do not yet have a practical way of computing
$\mathcal{M}(\bt,\nu)$, since we have assumed that integrals of the
form $\int_{\Omega}\mbox{exp}\left\{ f_{\bt}(\mathbf{y})+\nu\right\} \mbox{d}\mathbf{y}$
are intractable. The problem of approximating $\mathcal{M}(\bt,\nu)$
is dealt with in Section \ref{sec:Practical-approximations}, where
we will see that among other possibilities it can be approximated
by logistic regression via noise-contrastive divergence. 

Before we deal with practical ways of approximating $\mathcal{M}(\bt,\nu)$,
we first generalise the Poisson transform to non-IID data.

\subsection{The Poisson transform in the non-IID case\label{sub:Poisson-transform-non-IID}}

In the non-IID case we still have $n$ datapoints $\mathbf{y}_{1}\ldots\mathbf{y}_{n}\in\Omega^{n}$
but their distribution is allowed to vary. For example the $n$ datapoints
might form a Markov chain with (unnormalised) transition density
\[
p_{\bt}(\mathbf{y}_{t}|\mathbf{y}_{t-1})\propto\exp\left\{ f_{\bt}(\mathbf{y}_{t}|\mathbf{y}_{t-1})\right\} 
\]
which leads to the log-likelihood
\begin{equation}
\mathcal{L}(\bt)=\sum_{t=1}^{n}\left[f_{\bt}(\mathbf{y}_{t}|\mathbf{y}_{t-1})-\log\int_{\Omega}\mbox{exp}\left\{ f_{\bt}\left(\mathbf{y}|\mathbf{y}_{t-1}\right)\right\} \,\mbox{d}\mathbf{y}\right].\label{eq:loglik-sequential-case}
\end{equation}
(The initial point $\mathbf{y}_{0}$ is treated as a constant.) Another
example is models with covariates $\mathbf{x}_{i}$, expressed as
$p(\mathbf{y}_{i}|\mathbf{x}_{i},\bt)\propto\exp\left\{ f_{\bt}(\mathbf{y}_{i}|\mathbf{x}_{i})\right\} $.
These two cases are highly similar and for brevity we focus on the
sequential case, which we use in our application. 

Our first step is to extend the Poisson transform \eqref{eq:poisson-transformed-loglik}
to yield a function $\Mt$ where $\bn$ is now a vector of dimension
$n$ (one per conditional distribution), $\bm{\nu}=(\nu_{1},\ldots,\nu_{n})$
and 
\begin{align}
\Mt & =\sum_{t=1}^{n}\left\{ f_{\bt}(\mathbf{y}_{t}|\mathbf{y}_{t-1})+\nu_{t-1}\right\} \nonumber \\
 & -\int_{\Omega}\left[\sum_{t=1}^{n}\mbox{exp}\left\{ f_{\bt}\left(\mathbf{y}|\mathbf{y}_{t-1}\right)+\nu_{t-1}\right\} \right]\,\mbox{d}\mathbf{y}.\label{eq:poisson-transform-sequential}
\end{align}

\begin{thm}
The set of points $\bts$ such that $\bts\in\underset{\bt\in\Theta}{\argmax}\,\Lt$
matches the set of points $\tilde{\bt}$ such that $\left(\tilde{\bt},\bn^{\star}\right)=\underset{\bt\in\Theta,\bn\in\R^{n}}{\argmax}\,\Mt$.\label{thm:2}\end{thm}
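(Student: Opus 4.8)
The plan is to mimic the proof of Theorem~1, exploiting the fact that the non-IID objective $\Mt$ decouples across the latent variables $\nu_1,\ldots,\nu_n$. First I would observe that, for a fixed $\bt$, the map $\bn\mapsto\Mt$ separates into a sum of $n$ one-dimensional concave functions of the individual $\nu_{t-1}$'s: the term involving $\nu_{s}$ is $m_s(\nu_s)\mathrel{\mathop:}= \#\{t:\,t-1=s\}\,\nu_s - \big(\int_\Omega \exp\{f_\bt(\mathbf{y}|\mathbf{y}_{s})\}\,\mathrm{d}\mathbf{y}\big)\exp(\nu_s)$ up to reindexing. Since each conditional appears exactly once in the sequential likelihood (one term per $t$), each $m_s$ has a unique maximiser, obtained by setting the derivative to zero, namely $\nu_{t-1}^\star(\bt) = -\log\int_\Omega \exp\{f_\bt(\mathbf{y}|\mathbf{y}_{t-1})\}\,\mathrm{d}\mathbf{y}$, which is exactly the (negative log) normalisation constant of the $t$-th conditional.

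Next I would substitute $\bn^\star(\bt)$ back into \eqref{eq:poisson-transform-sequential} and check, by direct computation, that $\Mt[\bt,\bn^\star(\bt)] = \Lt - n$, where $\Lt$ is the sequential log-likelihood \eqref{eq:loglik-sequential-case}: the first sum gives $\sum_t\{f_\bt(\mathbf{y}_t|\mathbf{y}_{t-1}) + \nu_{t-1}^\star(\bt)\}$, and the integral term collapses to $\sum_t \exp(\nu_{t-1}^\star(\bt))\int_\Omega\exp\{f_\bt(\mathbf{y}|\mathbf{y}_{t-1})\}\,\mathrm{d}\mathbf{y} = \sum_t 1 = n$. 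Combined with the inequality $\Mt \le \Mt[\bt,\bn^\star(\bt)]$ valid for every $\bn$ (by the per-coordinate maximality), this shows the profile of $\Mt$ over $\bn$ equals $\Lt$ shifted by a constant, so maximising $\Mt$ jointly over $(\bt,\bn)$ is equivalent to maximising $\Lt$ over $\bt$. The matching of argmax sets then follows: any maximiser $\tilde{\bt}$ of $\Lt$ yields a maximiser $(\tilde{\bt},\bn^\star(\tilde{\bt}))$ of $\Mt$, and conversely projecting any joint maximiser onto its $\bt$-component gives a maximiser of $\Lt$.

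The one subtlety worth flagging — the ``hard part'' such as it is — concerns the finiteness and attainment conditions implicit in the statement: the equality $\nu_{t-1}^\star(\bt) = -\log\int_\Omega\exp\{f_\bt(\mathbf{y}|\mathbf{y}_{t-1})\}\,\mathrm{d}\mathbf{y}$ presupposes $0 < \int_\Omega\exp\{f_\bt(\mathbf{y}|\mathbf{y}_{t-1})\}\,\mathrm{d}\mathbf{y} < +\infty$ for each $t$, i.e.\ that every conditional is a genuine (normalisable) probability density, which is exactly the standing assumption that makes $\Lt$ in \eqref{eq:loglik-sequential-case} well defined; under that assumption the argument is purely mechanical. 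I would also note in passing that, as in the IID case, concavity is inherited: $\bn\mapsto\Mt$ is concave for fixed $\bt$, and if $f_\bt$ is affine in $\bt$ then $\Mt$ is jointly concave in $(\bt,\bn)$, so the joint maximiser is unique whenever the $\Lt$-maximiser is — giving the stated ``$=$'' rather than merely ``$\in$'' in the theorem. Since the covariate case is structurally identical (replace $\mathbf{y}_{t-1}$ by $\mathbf{x}_i$), the same proof covers it verbatim.
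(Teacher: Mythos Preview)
Your proposal is correct and follows essentially the same approach as the paper's own proof: maximise $\Mt$ in each $\nu_{t-1}$ separately to obtain $\nu_{t-1}^\star(\bt)=-\log\int_\Omega\exp\{f_{\bt}(\by\mid\by_{t-1})\}\,\mathrm{d}\by$, verify that $\mathcal{M}(\bt,\bn^\star(\bt))=\Lt-n$, and conclude. The paper's proof is a two-line sketch of exactly this argument; your additional remarks on decoupling, concavity, and the finiteness assumption are sound elaborations but not a different route.
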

\begin{proof}
The proof is along the same lines as that of the Theorem 1: maximising
$\Mt$ in $\nu_{t-1}$ gives $\nu_{t-1}^{\star}(\bt)=-\log\int_{\Omega}\exp\left\{ f_{\bt}(\by|\by_{t-1})\,\mbox{d}\mathbf{y}\right\} $,
and $\mathcal{M}(\bt,\bm{\nu}^{\star}(\bt))=\mathcal{L}(\bt)-n$.
\end{proof}
Note that while $\Lt$ involves the sum of $n$ separate integrals,
$\Mt$ involves a single integral over a sum.  Further, since
\[
\nu_{t-1}^{\star}\left(\bt\right)=-\log\left(\int_{\Omega}\mbox{exp}\left\{ f_{\bt}\left(\mathbf{y}|\mathbf{y}_{t-1}\right)\right\} \mbox{d}\mathbf{y}\right)
\]
the optimal value of $\nu_{t-1}$ is a function of \textbf{$\mathbf{y}_{t-1}$
}\emph{only}. This means that we can think of the integration constants
as (hopefully smooth) functions of the previous point $\mathbf{y}_{t-1}$.
This leads to the following result: let $\mathcal{F}$ denote an appropriate
function space that contains the function $\chi:\Omega\rightarrow\mathbb{R}$
such that $\chi(\bm{u})=-\log\int_{\Omega}\mbox{exp}\left\{ f_{\bt}\left(\mathbf{y}|\bm{u}\right)\right\} \mbox{d}\mathbf{y}$.
We introduce the following functional
\begin{align}
\M_{\chi}\left(\bt,\chi\right) & =\sum_{t=1}^{n}\left\{ f_{\bt}(\mathbf{y}_{t}|\mathbf{y}_{t-1})+\chi(\mathbf{y}_{t-1})\right\} \nonumber \\
 & -\int_{\Omega}\sum_{t}\mbox{exp}\left\{ f_{\bt}\left(\mathbf{y}|\mathbf{y}_{t-1}\right)+\chi(\mathbf{y}_{t-1})\right\} \mbox{d}\mathbf{y}.\label{eq:poisson-transform-seq-nonpar}
\end{align}

\begin{cor}
The set of points $\bts$ such that $\bts\in\underset{\bt\in\Theta}{\argmax}\,\Lt$
matches the set of points $\tilde{\bt}$ such that $\left(\tilde{\bt},\chi\right)\in\underset{\bt\in\Theta,\chi\in\mathcal{F}}{\mbox{argmax}}\,\M_{\chi}\left(\bt,\chi\right)$.
\label{thm:semi-parametricPoissonmodel}
\end{cor}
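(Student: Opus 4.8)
The plan is to reduce the corollary to Theorem~\ref{thm:2} by showing that the extra freedom in $\M_\chi$ over $\Mt$ changes nothing. The key observation is that $\M_\chi(\bt,\chi)$ depends on $\chi$ only through the $n$ values $\chi(\by_0),\ldots,\chi(\by_{n-1})$, i.e.\ through its evaluation at the observed predecessor points. Indeed, both the first sum and the integral in \eqref{eq:poisson-transform-seq-nonpar} only ever reference $\chi$ at the arguments $\by_{t-1}$, $t=1,\ldots,n$. So if I write $\nu_{t-1}:=\chi(\by_{t-1})$, then $\M_\chi(\bt,\chi)=\M(\bt,\bn)$ with $\bn=(\nu_0,\ldots,\nu_{n-1})$ exactly as in \eqref{eq:poisson-transform-sequential}.

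First I would make this substitution precise and argue the two optimisation problems have the same value: $\sup_{\bt,\chi\in\mathcal F}\M_\chi(\bt,\chi)\le\sup_{\bt,\bn\in\R^n}\M(\bt,\bn)$ holds trivially since every $\chi$ induces an admissible $\bn$; for the reverse inequality I need that $\mathcal F$ is rich enough to interpolate any prescribed vector of values at $\by_0,\ldots,\by_{n-1}$ — this is exactly what ``appropriate function space'' must guarantee, and I would state it as the standing assumption on $\mathcal F$ (e.g.\ $\mathcal F$ contains, for every finite set of distinct sites and target values, some function matching them; an RKHS with a strictly positive-definite kernel works, as does essentially any reasonable nonparametric class). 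Second, I would invoke Theorem~\ref{thm:2}: for fixed $\bt$, maximising over $\bn$ gives $\nu_{t-1}^\star(\bt)=-\log\int_\Omega\exp\{f_{\bt}(\by|\by_{t-1})\}\,\mathrm d\by$ and $\M(\bt,\bn^\star(\bt))=\Lt-n$, so the $\argmax$ in $\bt$ coincides with $\argmax\Lt$. Third, I would translate the optimal $\bn^\star(\bt)$ back into function space: the corollary's displayed definition of $\chi$ (with $\bm u\mapsto-\log\int\exp\{f_{\bt}(\by|\bm u)\}\,\mathrm d\by$) is precisely a function in $\mathcal F$ that realises these optimal values, so the pair $(\tilde\bt,\chi)$ attains the maximum of $\M_\chi$, and conversely any maximiser of $\M_\chi$ must have $\chi$ agreeing with $\nu^\star_{t-1}(\tilde\bt)$ at the observed sites.

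The main obstacle is purely a matter of bookkeeping about $\mathcal F$: the equivalence is genuinely only up to the $n$ evaluation points, so the statement ``matches the set of points $\tilde\bt$ such that $(\tilde\bt,\chi)\in\argmax$'' is slightly loose — for a given $\tilde\bt$ there are many maximising $\chi$ (all those agreeing with $\nu^\star(\tilde\bt)$ on $\by_0,\ldots,\by_{n-1}$), and the intended reading is that $\tilde\bt$ ranges over exactly $\argmax\Lt$ as $\chi$ ranges over $\mathcal F$. I would make that reading explicit and note that the interpolation hypothesis on $\mathcal F$ is both necessary (otherwise the constrained sup could be strictly smaller) and harmless (the specific $\chi$ named in the corollary already lies in any $\mathcal F$ worth using, so in particular the true conditional log-partition function is recoverable). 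Apart from this, no new analysis is needed; everything is inherited from Theorem~\ref{thm:2} once the ``depends only on finitely many evaluations'' reduction is in place.
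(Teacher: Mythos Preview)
Your proposal is correct and follows the same route as the paper: the corollary is stated immediately after the observation that $\nu_{t-1}^\star(\bt)$ depends only on $\by_{t-1}$, and the paper treats it as a direct consequence of Theorem~\ref{thm:2} together with the assumption that $\mathcal F$ contains the function $\bm u\mapsto -\log\int_\Omega\exp\{f_{\bt}(\by|\bm u)\}\,\mathrm d\by$. Your write-up is more explicit about the finite-evaluation reduction and the non-uniqueness of the maximising $\chi$, but note that the paper's standing assumption on $\mathcal F$ is slightly weaker than your general interpolation hypothesis---it only asks that the specific optimal normalisation function lie in $\mathcal F$, which already suffices for the reverse inequality.
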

We can use this Corollary to turn inference on unnormalised models
into a semiparametric problem, where $\bt$ is estimated parametrically
and the normalisation constants are estimated as a non-parametric
function $\chi(\mathbf{y}_{t-1})$. 

In the formulation used by Corollary \ref{thm:semi-parametricPoissonmodel}
there exists possibly (uncountably) many optimal normalisation functions
$\chi$, ie. functions that solve $\underset{\bt\in\Theta,\chi\in\mathcal{F}}{\mbox{argmax}}\,\M_{\chi}\left(\bt,\chi\right)$.
All that is required is that they interpolate the values of the normalisations
constants for the various $\mathbf{y}_{t-1}$ in the dataset. To get
a unique optimal normalisation function we need to regularise the
non-parametric part. 

A classical way to solve non-parametric regression problems is to
model the non-parametric part as belonging to a Reproducible Kernel
Hilbert Space (RKHS), and to add regularisation by including a penalty.
The following result shows that penalised non-parametric estimation
can be made consistent, and the optimal normalisation function becomes
uniquely defined. 
\begin{prop}
Let $\H$ denote a RKHS, with kernel function $k(\mathbf{y},\mathbf{y}')$
and $\left|f\right|_{_{\H}}$ the corresponding norm. Suppose $\H$
contains one optimal normalisation function, i.e. there exists an
$\chi^{*}(\mathbf{u})\in\H$, with $\left|\chi^{\star}\right|_{\H}<\infty$
Then there exists a value $\lambda_{0}>0$ such that the set of maximum
likelihood points $\bts\in\underset{\bt\in\Theta}{\argmax}\,\Lt$
matches the set of points penalised estimates $\tilde{\bt}$ defined
by: 
\begin{equation}
\left(\tilde{\bt},\chi\right)\in\underset{\bt\in\Theta,\chi\in\mathcal{\H}}{\mbox{argmax}}\,\M_{\chi}\left(\bt,\chi\right)-\lambda\left|\chi\right|_{\H}^{2}\label{eq:penalised-semiparametric-problem}
\end{equation}

i.e., the penalised non-parametric Poisson estimator is equivalent
to the maximum-likelihood estimator. \end{prop}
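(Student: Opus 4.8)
The plan is to profile out the non-parametric component, reducing \eqref{eq:penalised-semiparametric-problem} to a finite-dimensional, penalised version of the problem already solved in Theorem~\ref{thm:2}, and then to let the penalty vanish. The starting observation is that the functional $\M_{\chi}(\bt,\chi)$ of \eqref{eq:poisson-transform-seq-nonpar} depends on $\chi$ only through the finite vector of evaluations $(\chi(\by_{0}),\ldots,\chi(\by_{n-1}))$, which I identify with $\bn\in\R^{n}$. Hence, for a fixed $\bt$ and a fixed target vector $\bn$, the value of $\left|\chi\right|_{\H}^{2}$ is minimised over all $\chi\in\H$ having those evaluations by the minimum-norm interpolant supplied by the representer theorem, whose squared norm is the quadratic form $\bn^{\top}K^{+}\bn$, where $K$ is the Gram matrix $K_{ij}=k(\by_{i-1},\by_{j-1})$, $K^{+}$ its pseudo-inverse, and $\bn$ is restricted to the column space of $K$ (which contains $\bn^{\star}(\bts)$ by the hypothesis $\chi^{\star}\in\H$). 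Substituting, \eqref{eq:penalised-semiparametric-problem} becomes $\max_{\bt\in\Theta}g_{\lambda}(\bt)$ with $g_{\lambda}(\bt)=\max_{\bn}\left\{ \Mt-\lambda\,\bn^{\top}K^{+}\bn\right\}$, i.e. exactly the Poisson-transformed problem of Theorem~\ref{thm:2} with a ridge penalty on the normalisation vector.

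Next I would use the identity from the proof of Theorem~\ref{thm:2}, namely $\max_{\bn}\Mt=\Lt-n$, attained at $\bn^{\star}(\bt)$. This gives the sandwich
\[
\Lt-n-\lambda\left(\bn^{\star}(\bt)\right)^{\top}K^{+}\bn^{\star}(\bt)\;\le\;g_{\lambda}(\bt)\;\le\;\Lt-n .
\]
In particular $g_{0}\equiv\L-n$, so $\argmax_{\bt}g_{0}=\argmax_{\bt}\Lt$; and evaluating at a maximiser $\bts$ of $\L$ together with the interpolant $\chi^{\star}$ furnished by the hypothesis gives $g_{\lambda}(\bts)\ge\L(\bts)-n-\lambda\left|\chi^{\star}\right|_{\H}^{2}$. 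Comparing this lower bound with the upper bound $g_{\lambda}(\tilde{\bt})\le\L(\tilde{\bt})-n$ at any maximiser $(\tilde{\bt},\tilde{\chi})$ of the penalised problem shows simultaneously that $\left|\tilde{\chi}\right|_{\H}^{2}\le\left|\chi^{\star}\right|_{\H}^{2}$ (the non-parametric part stays bounded) and that $\L(\tilde{\bt})\ge\max_{\bt}\L(\bt)-\lambda\left|\chi^{\star}\right|_{\H}^{2}$; a symmetric argument shows every maximiser of $\L$ is realised, up to the same slack, by a maximiser of $g_{\lambda}$.

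The delicate point, which I expect to be the main obstacle, is to upgrade this near-matching into exact equality of the two argmax sets for a fixed \emph{positive} threshold $\lambda_{0}$, rather than only in the limit. Taking $\lambda\downarrow0$ and using compactness of $\Theta$ together with continuity of $\L$, every accumulation point of the maximisers of $g_{\lambda}$ lies in $\argmax_{\bt}\Lt$, and conversely; to obtain a strictly positive $\lambda_{0}$ one needs the unpenalised maximiser set to be isolated in value — for instance $\L$ locally quadratic around a unique, non-singular maximum (the non-degenerate, identifiable case, in which $\M$ moreover stays concave in exponential families) — in which case any $\lambda_{0}$ small enough that $\lambda_{0}\left|\chi^{\star}\right|_{\H}^{2}$ is dominated by the curvature gap will do, and it is precisely the finiteness $\left|\chi^{\star}\right|_{\H}<\infty$ that makes this $\lambda_{0}$ strictly positive. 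Without such a curvature or identifiability assumption the ridge penalty displaces $\tilde{\bt}$ from $\argmax_{\bt}\Lt$ by an amount of order $\sqrt{\lambda}$, so in full generality it is the limiting (consistency) form of the claim that one should expect to establish.
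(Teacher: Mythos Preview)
Your route differs from the paper's, which is much shorter: it rewrites the penalised problem as the constrained one $\max_{\bt,\chi}\M_{\chi}(\bt,\chi)$ subject to $|\chi|_{\H}^{2}\le\rho$ (via the Lagrangian), observes that since $\chi^{\star}\in\H$ with finite norm one may take $\rho_{0}=|\chi^{\star}|_{\H}^{2}$ so that the constraint becomes slack and the constrained problem coincides with the unconstrained one of Corollary~\ref{thm:semi-parametricPoissonmodel}, and then asserts that ``correspondingly there exists a penalisation parameter $\lambda_{0}>0$'' for which the penalised and unpenalised estimates match.

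Your approach --- profiling out $\chi$ via the representer theorem to obtain a finite-dimensional ridge-penalised version of Theorem~\ref{thm:2}, then sandwiching the profile $g_{\lambda}(\bt)$ between $\L(\bt)-n$ and $\L(\bt)-n-\lambda\,\bn^{\star}(\bt)^{\top}K^{+}\bn^{\star}(\bt)$ --- is longer but also more revealing. The ``delicate point'' you isolate is genuine and is exactly the step the paper takes on faith: when the constraint $|\chi|_{\H}^{2}\le\rho_{0}$ is not binding at the optimum, the KKT multiplier is $0$, so the passage from ``slack constraint at $\rho_{0}$'' to ``penalised problem with a \emph{strictly positive} $\lambda_{0}$ has the same $\bt$-argmax'' is not automatic. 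Your first-order picture --- that a ridge penalty on $\bn$ generically displaces the profiled maximiser in $\bt$ by order $\sqrt{\lambda}$ unless a curvature or identifiability gap is assumed --- is correct, and the side conditions you suggest are precisely what would be needed to upgrade the limiting statement to an exact one at positive $\lambda_{0}$. In short, you have not missed an ingredient that the paper supplies; you have been more scrupulous about a step the paper treats informally.
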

\begin{proof}
The penalised problem is equivalent to the following constrained optimisation
problem:

\begin{eqnarray*}
\underset{\bt\in\Theta,\chi\in\mathcal{\H}}{\mbox{argmax}} & \, & \M_{\chi}\left(\bt,\chi\right)\\
\mbox{subject to} &  & \left|\chi\right|_{\H}^{2}\leq\rho
\end{eqnarray*}

for some value $\rho$ dependent on $\lambda$ (this follows from
writing the Lagrangian). By the assumption that there exists an optimal
normalisation function in $\H$ with finite norm, there exists a $\rho_{0}<\infty$
such that the constraint is irrelevant and solving the constrained
problen above is equivalent to solving the non-penalised problem $\underset{\bt\in\Theta,\chi\in\mathcal{F}}{\mbox{argmax}}\,\M_{\chi}\left(\bt,\chi\right)$
from Lemma \ref{thm:semi-parametricPoissonmodel}. Correspondingly
there exists a penalisation parameter $\lambda_{0}>0$ such that the
penalised estimate (\ref{eq:penalised-semiparametric-problem}) matches
the non-penalised estimate. \end{proof}
\begin{rem}
For fixed $\bt$, $\underset{\chi\in\mathcal{\H}}{\mbox{argmax}}\,\M_{\chi}\left(\bt,\chi\right)-\lambda\left|\chi\right|_{\H}^{2}$
has a unique solution that can be expressed as $\chi\left(\mathbf{u}\right)=\sum\alpha_{t-1}k(\mathbf{u},\mathbf{y}_{t-1})$\end{rem}
\begin{proof}
The result follows from a straightforward application of the Representer
Theorem (see \citealp{ScholkopfSmola:LWK}, page 90). 
\end{proof}
We have only established so far that there exists a value $\lambda_{0}$
so that the penalised non-parametric estimator is equivalent to the
ML estimator. We cannot expect to know that value in advance, and
so $\lambda_{0}$ needs to be estimated from the data. The following
Corrolary comes to the rescue:
\begin{cor}
Note $\bt\left(\lambda\right),\chi\left(\lambda\right)$ the solution
for the penalised problem (eq. \eqref{eq:penalised-semiparametric-problem})
with regularisation parameter $\lambda$. For all $\lambda\leq\lambda_{0}$,
$\M_{\chi}\left(\bt\left(\lambda\right),\chi\left(\lambda\right)\right)=\M_{\chi}\left(\bt\left(\lambda_{0}\right),\chi\left(\lambda_{0}\right)\right)$,
i.e. there is no further improvement to the optimal value of the Poisson
transform by relaxing the penalty beyond $\lambda_{0}$.\label{cor:Cost-doesnt-improve-beyond-lambda0}\end{cor}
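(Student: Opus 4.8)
The plan is to pin the value $m(\lambda):=\M_{\chi}(\bt(\lambda),\chi(\lambda))$ of the Poisson transform at the penalised solution to the global maximum of $\M_{\chi}$ for every $\lambda\le\lambda_{0}$, which gives $m(\lambda)=m(\lambda_{0})$ at once. Write $V^{\star}:=\L(\bts)-n$; by the computation $\M(\bt,\bn^{\star}(\bt))=\L(\bt)-n$ behind Theorem~\ref{thm:2} together with Corollary~\ref{thm:semi-parametricPoissonmodel}, this is the global maximum of $\M_{\chi}(\bt,\chi)$ over $\bt\in\Theta$ and over every normalisation function $\chi$, and it is attained. Two observations then bracket $m(\lambda)$. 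First, $m(\lambda)\le V^{\star}$ for every $\lambda$, trivially, since the penalised solution is in particular one admissible pair $(\bt,\chi)$. Second, at $\lambda=\lambda_{0}$ the preceding Proposition — more precisely the step in its proof where the constraint $|\chi|_{\H}^{2}\le\rho_{0}$ is shown inactive at the optimum — says that $(\bt(\lambda_{0}),\chi(\lambda_{0}))$ is also an un-penalised maximiser of $\M_{\chi}$, so $m(\lambda_{0})=V^{\star}$, and $|\chi(\lambda_{0})|_{\H}<\infty$ since $\chi(\lambda_{0})\in\H$. It remains to show $m(\lambda)\ge V^{\star}$ for $0<\lambda\le\lambda_{0}$.

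For the lower bound I would test the penalised problem at $\lambda$ against the competitor $(\bt(\lambda_{0}),\chi(\lambda_{0}))$, which is feasible. Optimality of $(\bt(\lambda),\chi(\lambda))$ for $\M_{\chi}-\lambda|\chi|_{\H}^{2}$ gives
\[
\M_{\chi}(\bt(\lambda),\chi(\lambda))-\lambda|\chi(\lambda)|_{\H}^{2}\ \ge\ \M_{\chi}(\bt(\lambda_{0}),\chi(\lambda_{0}))-\lambda|\chi(\lambda_{0})|_{\H}^{2}\ =\ V^{\star}-\lambda|\chi(\lambda_{0})|_{\H}^{2},
\]
hence $m(\lambda)\ge V^{\star}+\lambda\bigl(|\chi(\lambda)|_{\H}^{2}-|\chi(\lambda_{0})|_{\H}^{2}\bigr)$. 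Everything thus reduces to the monotonicity fact that weakening the penalty cannot shrink the norm of the optimal normalisation function, i.e. $|\chi(\lambda)|_{\H}^{2}\ge|\chi(\lambda_{0})|_{\H}^{2}$ whenever $\lambda\le\lambda_{0}$. This is the one place where the proof needs an idea of its own rather than a direct appeal to earlier results, and is where I would concentrate the write-up; it is nonetheless a routine exchange argument. Writing the two sub-optimality inequalities at $\lambda_{1}=\lambda$ and $\lambda_{2}=\lambda_{0}$ for the respective solutions and adding them, the $\M_{\chi}$-terms cancel and one is left with $(\lambda_{1}-\lambda_{2})\bigl(|\chi(\lambda_{2})|_{\H}^{2}-|\chi(\lambda_{1})|_{\H}^{2}\bigr)\ge0$; since $\lambda_{1}<\lambda_{2}$ this forces $|\chi(\lambda_{1})|_{\H}^{2}\ge|\chi(\lambda_{2})|_{\H}^{2}$. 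The computation never uses uniqueness of the penalised solutions, so it does not matter that $\bt(\lambda),\chi(\lambda)$ are only defined up to a choice of maximiser.

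Putting the three facts together gives $V^{\star}\le m(\lambda)\le V^{\star}$, hence $m(\lambda)=V^{\star}=m(\lambda_{0})$ for all $\lambda\le\lambda_{0}$, which is the claim. As a free by-product, substituting $m(\lambda)\le V^{\star}$ back into the displayed inequality forces $|\chi(\lambda)|_{\H}^{2}=|\chi(\lambda_{0})|_{\H}^{2}$ on this range, so once $\lambda$ drops below $\lambda_{0}$ not only the value but also the $\H$-norm of the non-parametric part is frozen; I would record this as a remark, since it sharpens the reading of $\lambda_{0}$ as an activation threshold. The only points to verify along the way are that $\lambda_{0}>0$ (supplied by the preceding Proposition, which legitimises dividing through by $\lambda$ in the exchange step) and that the supremum defining $V^{\star}$ is attained (supplied by Corollary~\ref{thm:semi-parametricPoissonmodel}); nothing beyond these is needed.
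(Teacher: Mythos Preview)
Your proof is correct, and it takes a genuinely different route from the paper. The paper's argument is essentially a one-liner via the constrained reformulation used in the preceding Proposition: the penalised problem at $\lambda$ is equivalent to maximising $\M_{\chi}$ subject to $|\chi|_{\H}^{2}\le\rho(\lambda)$, and since at $\lambda_{0}$ the unconstrained maximum already lies inside the feasible set, relaxing the constraint further (i.e.\ decreasing $\lambda$) cannot improve the value. You instead stay entirely within the penalised formulation and obtain the result by sandwiching $m(\lambda)$ between the global maximum $V^{\star}$ from above (trivially) and from below via the exchange argument $(\lambda_{1}-\lambda_{2})\bigl(|\chi(\lambda_{2})|_{\H}^{2}-|\chi(\lambda_{1})|_{\H}^{2}\bigr)\ge 0$. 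What this buys you is self-containment: the paper's short argument tacitly relies on the monotone correspondence $\lambda\mapsto\rho(\lambda)$ between multiplier and constraint level, which is standard for convex problems but is never actually justified here; your exchange step is precisely the computation that establishes this monotonicity directly, without any appeal to duality or convexity. The by-product you record---that $|\chi(\lambda)|_{\H}$ is frozen on $(0,\lambda_{0}]$---is not in the paper and is a nice sharpening.
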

\begin{proof}
The proof follows again from the constrained formulation. By $\lambda_{0}$
we have already found the optimal solution and there is no point relaxing
the constraint further. 
\end{proof}
What the result suggests is that we could start with a high value
for $\lambda$, perform the optimisation, and reduce the value of
$\lambda$ until the value of $\M_{\chi}\left(\bt\left(\lambda\right),\chi\left(\lambda\right)\right)$
stops improving. We will then have found the most ``simple'' function
that interpolates the normalisation constants. Unfortunately Corrolary
\ref{cor:Cost-doesnt-improve-beyond-lambda0} does not hold for noise-contrastive
divergence, and so a different strategy (such as cross-validation)
has to be used for selecting $\lambda$. We return to the issue in
the examples.

\section{Practical approximations for the Poisson transform\label{sec:Practical-approximations}}

The Poisson transform gives us an alternative likelihood function
for estimation, but one that still involves an intractable integral.
In this section we briefly describe some practical approximations.
One is based on importance sampling and leads to an unbiased estimate
of the gradient (meaning that novel stochastic gradient and approximate
Langevin sampling methods are possible). The second is based on logistic
regression: we show that the noise-contrastive divergence of \citet{GutmannHyvarinen:NoiseContrastEstUnStatMod}
approximates the Poisson-transformed likelihood. Using that connection,
estimation in any non-IID setting can be turned into a semiparametric
classification problem.

\subsection{Unbiased estimation of the gradient \label{sub:Unbiased-estimation-of-gradient}}

The first derivatives of $\Mtn$ (eq. \ref{eq:poisson-transformed-loglik})
equal:
\begin{eqnarray*}
\frac{1}{n}\frac{\partial}{\partial\bt}\Mtn & = & \frac{1}{n}\sum_{i=1}^{n}\frac{\partial}{\partial\bt}f_{\bt}(\mathbf{y}_{i})-\int_{\Omega}\frac{\partial}{\partial\bt}f_{\bt}(\mathbf{y}_{i})\mbox{exp}\left\{ f_{\bt}(\mathbf{y})+\nu\right\} \mbox{d}\mathbf{y}\\
\frac{1}{n}\frac{\partial}{\partial\nu}\Mtn & = & 1-\int_{\Omega}\mbox{exp}\left\{ f_{\bt}(\mathbf{y})+\nu\right\} \mbox{d}\mathbf{y}
\end{eqnarray*}

The integrals on the right hand side can be estimated unbiasedly by
Monte Carlo, which is not true in general for the untransformed likelihood.
The availability of an unbiased estimator for the gradient means that
stochastic gradient algorithms (and their MCMC counterpart, approximate
Langevin sampling, \citealp{Welling:BayesianLearningViaStochGradLangevin})
can be applied directly. The resulting method has a straightforward
interpretation, since we simply adjust $\nu$ until $\mbox{exp}\left\{ f_{\bt}(\mathbf{y})+\nu\right\} $
normalises to 1 on average.

\subsection{Logistic likelihood as an approximation: IID case\label{sub:Logistic-likelihoods}}

In this section we show how to approximate Poisson-transformed likelihoods,
see \eqref{eq:poisson-transformed-loglik} and \eqref{eq:loglik-sequential-case},
using logistic regression. Reductions to logistic regression appear
in many places in the statistical literature. In the context of estimation
it is described in the well-known textbook of \citet{Hastie:ESL}
and in detail in \citet{Baddeley:SpatialLogisticRegAndChangeOfSupport}.
The use of logistic regression to estimate \emph{normalisation constants
}is described in \citet{Geyer:EstimatingNormConstantsReweightMixtures}.
Recently \citet{GutmannHyvarinen:NoiseContrastEstUnStatMod} introduced
a more general theory which they call ``noise-contrastive divergence'',
and show that logistic regression can be used for joint estimation
of parameters and normalisation constants. 

The essence of noise-contrastive divergence is to try and teach a
logistic classifier to tell true data $\mathcal{S}=\left\{ \by_{1},\ldots,\by_{n}\right\} $,
generated from $p_{\bt}(\by)$, from random reference data $\mathcal{R}=\left\{ \bm{r}_{1},\ldots,\bm{r}_{m}\right\} $,
generated from some distribution with density $q(\bm{r})$. Picking
a point $\bm{u}$ at random from $\mathcal{S}\cup\mathcal{R}$, and
denoting $z=1$ (resp. $z=0$) the event that $\bm{u}$ comes from
$\mathcal{S}$ (resp. $\mathcal{R}$), one obtains the following log
odds ratio: 
\begin{equation}
\log\frac{p\left(z=1\vert\bm{u}\right)}{p\left(z=0\vert\bm{u}\right)}=\log p_{\bt}\left(\bm{u}\right)-\log q(\bm{u})+\log\left(n/m\right).\label{eq:logistic-ratio-static}
\end{equation}

If we assume additionally that $p_{\bt}(\by)$ is unnormalised, $p_{\bt}(\by)\propto\exp\left\{ f_{\bt}(\by)\right\} $,
one may replace above, in the same spirit as in our Poisson transform,
the term $\log p_{\bt}(\bm{u})$ by $f_{\bt}(\bm{u})+\nu$, leading
to 

\begin{equation}
\log\frac{p\left(z=1\vert u\right)}{p\left(z=0\vert u\right)}=f_{\bt}(\bm{u})+\nu-\log q(\bm{u})+\log(n/m).\label{eq:logistic-ratio-unnormalised}
\end{equation}

This leads to following simple recipe: generate reference data $\mathcal{R}$,
then estimate jointly $(\bt,\nu)$ by fitting the logistic regression
\eqref{eq:logistic-ratio-unnormalised} to the dataset $\mathcal{S}\cup\mathcal{R}$,
with points in $\mathcal{S}$ (resp. $\mathcal{R}$) labelled as $z_{i}=1$
(resp. $z_{i}=0$). 

The obvious connection between our Poisson transform and the noise-contrastive
approach is that in both cases the log normalising constant is treated
as a free parameter. The following result reveals that this connection
is actually deeper. 
\begin{thm}
\label{thm:logistic-reg-tends-to-IPP}For fixed $\bt$, $\nu$, and
$\mathcal{S}=\left\{ \by_{1},\ldots,\by_{n}\right\} $, and under
the assumption that $f_{\bt}(\bm{y})-\log q(\bm{y})\leq C(\bt)$ for
all $\by\in\Omega$, the log-likelihood of the logistic regression
defined above: 
\begin{eqnarray*}
\mathcal{R}^{m}(\bt,\nu) & = & \sum_{i=1}^{n}\log\left[\frac{n\exp\left\{ f_{\bt}(\by_{i})+\nu\right\} }{n\exp\left\{ f_{\bt}(\by_{i})+\nu\right\} +mq(\by_{i})}\right]\\
 &  & +\sum_{j=1}^{m}\log\left[\frac{mq(\bm{r}_{j})}{n\exp\left\{ f_{\bt}(\bm{r}_{j})+\nu\right\} +mq(\bm{r}_{j})}\right]
\end{eqnarray*}
is such that 
\begin{equation}
\mathcal{R}^{m}(\bt,\nu)+n\log(m/n)+\sum_{i=1}^{n}\log q(\bm{y}_{i})\rightarrow\mathcal{M}(\bt,\nu)\label{eq:logistic-converges-to-IPP}
\end{equation}
 almost surely as $m\rightarrow+\infty$, relative to the randomness
induced by the reference points $\mathcal{R}=\left\{ \bm{r}_{1},\ldots,\bm{r}_{m}\right\} $.\end{thm}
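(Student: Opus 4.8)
The plan is to split $\mathcal{R}^{m}(\bt,\nu)$ into the contribution of the true data $\mathcal{S}$ and the contribution of the reference sample $\mathcal{R}$, and to analyse each as $m\to+\infty$ with $\bt$, $\nu$ and $\mathcal{S}$ frozen. It is convenient to set $g(\by)=\exp\{f_{\bt}(\by)+\nu\}/q(\by)$, since the hypothesis $f_{\bt}(\by)-\log q(\by)\le C(\bt)$ says precisely that $g$ is bounded above by $M:=\exp\{C(\bt)+\nu\}$; note in passing that this forces $\int_{\Omega}\exp\{f_{\bt}(\by)+\nu\}\,\mbox{d}\by=\int_{\Omega}g(\by)q(\by)\,\mbox{d}\by\le M<+\infty$, so $g$ is $q$-integrable and the right-hand side of \eqref{eq:logistic-converges-to-IPP} is well defined.

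For the data sum I would write each term as $\log\bigl(n\exp\{f_{\bt}(\by_{i})+\nu\}\bigr)-\log\bigl(n\exp\{f_{\bt}(\by_{i})+\nu\}+mq(\by_{i})\bigr)$ and pull $mq(\by_{i})$ out of the second logarithm. For each fixed $i$ this produces $\log n+f_{\bt}(\by_{i})+\nu-\log m-\log q(\by_{i})-\log\!\bigl(1+\tfrac{n}{m}g(\by_{i})\bigr)$, and the last term is $O(1/m)$ because $n$ and $g(\by_{i})$ are fixed. Summing the $n$ terms, the data sum equals $n\log(n/m)+\sum_{i=1}^{n}\{f_{\bt}(\by_{i})+\nu\}-\sum_{i=1}^{n}\log q(\by_{i})+o(1)$, a purely deterministic statement (which uses $q(\by_{i})>0$ on the observed data, a condition needed anyway for the log-odds to be defined).

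For the reference sum, rewrite it as $-\sum_{j=1}^{m}\log\!\bigl(1+\tfrac{n}{m}g(\bm{r}_{j})\bigr)$. The boundedness of $g$ is the crux: using the elementary inequality $0\le x-\log(1+x)\le x^{2}/2$ valid for all $x\ge 0$, one gets $\bigl|-\sum_{j=1}^{m}\log(1+\tfrac{n}{m}g(\bm{r}_{j}))+\tfrac{n}{m}\sum_{j=1}^{m}g(\bm{r}_{j})\bigr|\le\tfrac12\sum_{j=1}^{m}\bigl(\tfrac{n}{m}g(\bm{r}_{j})\bigr)^{2}\le\tfrac{n^{2}M^{2}}{2m}\to 0$. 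By the strong law of large numbers applied to the i.i.d.\ bounded variables $g(\bm{r}_{1}),g(\bm{r}_{2}),\ldots$, $\tfrac{1}{m}\sum_{j=1}^{m}g(\bm{r}_{j})\to\int_{\Omega}g(\by)q(\by)\,\mbox{d}\by=\int_{\Omega}\exp\{f_{\bt}(\by)+\nu\}\,\mbox{d}\by$ almost surely, hence the reference sum converges almost surely to $-n\int_{\Omega}\exp\{f_{\bt}(\by)+\nu\}\,\mbox{d}\by$.

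Adding the two pieces gives $\mathcal{R}^{m}(\bt,\nu)=n\log(n/m)+\sum_{i=1}^{n}\{f_{\bt}(\by_{i})+\nu\}-\sum_{i=1}^{n}\log q(\by_{i})-n\int_{\Omega}\exp\{f_{\bt}(\by)+\nu\}\,\mbox{d}\by+o(1)$ almost surely; the additive correction $n\log(m/n)+\sum_{i}\log q(\by_{i})$ in \eqref{eq:logistic-converges-to-IPP} is exactly what cancels the $m$-dependent and $q$-dependent terms, leaving $\mathcal{M}(\bt,\nu)$ of \eqref{eq:poisson-transformed-loglik}. The only real (and mild) obstacle is the reference sum: one must not expand $\log(1+\cdot)$ term by term, since each summand is $O(1/m)$ but there are $m$ of them, so the remainder must be controlled uniformly, and this is precisely where the hypothesis $f_{\bt}-\log q\le C(\bt)$ cannot be dispensed with, as otherwise $\tfrac{n}{m}g(\bm{r}_{j})$ need not be uniformly small. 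The rest is bookkeeping, and the same argument, carrying the time index $t$ and replacing $\nu$ by the entries of $\bn$ (or by $\chi(\by_{t-1})$), covers the non-IID variants.
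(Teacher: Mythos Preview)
Your argument is correct and follows essentially the same route as the paper: split $\mathcal{R}^{m}$ into the data sum and the reference sum, treat the former as a deterministic $o(1)$ perturbation of $\sum_{i}\{f_{\bt}(\by_{i})+\nu-\log q(\by_{i})\}$, and handle the latter by a Taylor bound on the logarithm combined with the strong law of large numbers, using the hypothesis to make the remainder uniformly $O(1/m)$. Your parametrisation via $g=\exp\{f_{\bt}+\nu\}/q$ and the single inequality $0\le x-\log(1+x)\le x^{2}/2$ is in fact slightly tidier than the paper's, which writes each reference term as $\log(1-x)$ and then uses two successive approximations ($\log(1-x)\approx -x$ and $\frac{1}{1+m e^{-h}}\approx \frac{1}{m}e^{h}$), but the substance is identical.
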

\begin{proof}
See Appendix.
\end{proof}
The theorem above establishes that $\mathcal{R}^{m}(\bt,\nu)$ converges
to $\mathcal{M}(\bt,\nu)$ pointwise (up to a constant). Uniform convergence
(with respect to $\bt$) may be proved under stronger conditions.
As a corollary, one obtains that the MLE based on $\mathcal{R}^{m}(\bt,\nu)$
converges to the intractable MLE of $\mathcal{M}(\bt,\nu)$ as $m\rightarrow+\infty$. 
\begin{thm}
\label{thm:MLEconverges}Assume that (i) $\Theta$ is a bounded set,
that (ii) $\left|f_{\bt}(\by)-\log q(\by)\right|\leq C$ for some
$C>0$ and all $\by\in\Omega$, that (iii) $\left|f_{\bt}(\by)-f_{\bt'}(\by)\right|\leq\kappa(\by)\left\Vert \bt-\bt'\right\Vert $
for all $\by\in\Omega$ and $\bt,\bt'\in\Theta$, with $\E_{q}[\kappa]<\infty$,
that (iv) there exists $\hat{\bt}$ such that $\L\left(\hat{\bt}\right)>\sup_{d(\hat{\bt},\bt)\geq\epsilon}\L\left(\bt\right)$,
for any $\epsilon>0$. Then for fixed $\mathcal{S}=\{\by_{1},\ldots,\by_{n}\}$,
and $\left(\tilde{\bt}_{m},\tilde{\nu}_{m}\right)$ such that $\mathcal{R}^{m}(\tilde{\bt}_{m},\tilde{\nu}_{m})=\sup_{\left(\bt,\nu\right)\in\Theta\times\R}\mathcal{R}^{m}(\bt,\nu)$,
one has 
\[
\tilde{\bt}_{m}\rightarrow\hat{\bt}\quad\mbox{a.s. }
\]
as $m\rightarrow+\infty$, relative to the randomness induced by the
reference points $\mathcal{R}=\{\bm{r}_{1},\ldots,\bm{r}_{m}\}$.\end{thm}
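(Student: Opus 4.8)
The plan is to run the classical Wald-type consistency argument: first confine the maximisers to a fixed region, then upgrade the pointwise convergence of Theorem~\ref{thm:logistic-reg-tends-to-IPP} to uniform convergence there, and finally invoke the standard argmax lemma for $M$-estimators. Since the term $n\log(m/n)+\sum_{i}\log q(\by_{i})$ appearing in~\eqref{eq:logistic-converges-to-IPP} does not depend on $(\bt,\nu)$, it is equivalent to work with $\tilde{\M}^{m}(\bt,\nu):=\mathcal{R}^{m}(\bt,\nu)+n\log(m/n)+\sum_{i}\log q(\by_{i})$, which has the same maximisers as $\mathcal{R}^{m}$ and, by Theorem~\ref{thm:logistic-reg-tends-to-IPP}, satisfies $\tilde{\M}^{m}(\bt,\nu)\to\Mtn$ almost surely for each fixed $(\bt,\nu)$. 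Assumption~(iv) forces $\hat{\bt}$ to be the unique maximiser of $\Lt$, so by Theorem~1 the function $\Mtn$ has the unique global maximiser $(\hat{\bt},\nu^{\star}(\hat{\bt}))$ over $\Theta\times\R$, with $\nu^{\star}(\bt)=-\log\int_{\Omega}\exp\{f_{\bt}(\by)\}\,\mathrm{d}\by$.

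The next step is to confine the latent variable. For fixed $\bt$ the map $\nu\mapsto\mathcal{R}^{m}(\bt,\nu)$ is concave and coercive (it is a logistic log-likelihood and $\nu$ enters linearly), hence its maximiser is interior and solves $\partial\mathcal{R}^{m}/\partial\nu=0$; substituting assumption~(ii) into that equation, in the form $e^{-C}\le\exp\{f_{\bt}(\by)\}/q(\by)\le e^{C}$, a short manipulation gives $\tilde{\nu}_{m}\in[-C,\,C+\log\tfrac{m}{m-n}]$ for every $\bt\in\Theta$ and every $m>n$, and the same bound gives $\nu^{\star}(\bt)\in[-C,C]$. Consequently, for all $m$ large enough both $(\tilde{\bt}_{m},\tilde{\nu}_{m})$ and $(\hat{\bt},\nu^{\star}(\hat{\bt}))$ lie in $\Theta\times N$ with $N:=[-C-1,\,C+1]$, and it is enough to establish uniform convergence of $\tilde{\M}^{m}$ to $\Mtn$ on $\Theta\times N$.

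For the uniform convergence I expand $\mathcal{R}^{m}$. The sum over the observations $\by_{i}$ equals $\sum_{i}\{f_{\bt}(\by_{i})+\nu\}+n\log n-n\log m-\sum_{i}\log q(\by_{i})-\sum_{i}\log\bigl(1+\tfrac{n e^{\nu}\exp\{f_{\bt}(\by_{i})\}}{m q(\by_{i})}\bigr)$; after adding the shift it becomes $\sum_{i}\{f_{\bt}(\by_{i})+\nu\}$ minus a sum of $\log(1+O(1/m))$ terms, which vanishes uniformly on $\Theta\times N$ by~(ii) and the boundedness of $\nu$. The sum over the reference points $\bm{r}_{j}$ equals $-\sum_{j}\log\bigl(1+\tfrac{n e^{\nu}\exp\{f_{\bt}(\bm{r}_{j})\}}{m q(\bm{r}_{j})}\bigr)$, and the elementary bound $x-\tfrac{x^{2}}{2}\le\log(1+x)\le x$ ($x\ge0$) rewrites this as $-n e^{\nu}\,\tfrac{1}{m}\sum_{j}\tfrac{\exp\{f_{\bt}(\bm{r}_{j})\}}{q(\bm{r}_{j})}$ plus a remainder bounded in modulus by $\tfrac{n^{2}e^{2\nu}e^{2C}}{2m}$, again uniformly on $\Theta\times N$. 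The one substantive point is that $\tfrac{1}{m}\sum_{j}\tfrac{\exp\{f_{\bt}(\bm{r}_{j})\}}{q(\bm{r}_{j})}\to\E_{q}\!\bigl[\exp\{f_{\bt}\}/q\bigr]=\int_{\Omega}\exp\{f_{\bt}(\by)\}\,\mathrm{d}\by$ \emph{uniformly over $\bt\in\Theta$}; this is a uniform strong law of large numbers for the family $\{\bm{r}\mapsto\exp\{f_{\bt}(\bm{r})\}/q(\bm{r}):\bt\in\Theta\}$, which has the constant envelope $e^{C}$ by~(ii) and, combining~(ii) and~(iii), satisfies $\bigl|\exp\{f_{\bt}(\bm{r})\}/q(\bm{r})-\exp\{f_{\bt'}(\bm{r})\}/q(\bm{r})\bigr|\le e^{C}\kappa(\bm{r})\,\|\bt-\bt'\|$ with $\E_{q}[\kappa]<\infty$; since $\Theta$ is bounded by~(i), this Lipschitz-with-integrable-modulus property yields the Glivenko--Cantelli conclusion by the standard argument of covering $\Theta$ by finitely many balls of small radius. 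Assembling the three pieces gives $\tilde{\M}^{m}\to\Mtn$ uniformly on $\Theta\times N$ almost surely, and $\Mtn$ is continuous on $\Theta\times N$ by~(iii).

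It remains to observe that $(\hat{\bt},\nu^{\star}(\hat{\bt}))$ is a \emph{well-separated} maximum of $\Mtn$ over $\Theta\times N$: from $\Mtn\le\mathcal{M}(\bt,\nu^{\star}(\bt))=\Lt-n$ one gets the required separation in the $\bt$-direction directly from~(iv), and in the $\nu$-direction from the strict concavity of $\mathcal{M}(\bt,\cdot)$, whose second derivative $-n e^{\nu}\int_{\Omega}\exp\{f_{\bt}(\by)\}\,\mathrm{d}\by$ is bounded away from $0$ on $N$ by~(ii). The argmax lemma --- uniform convergence on $\Theta\times N$, maximisers eventually in $\Theta\times N$, and a well-separated limiting maximum --- then gives $(\tilde{\bt}_{m},\tilde{\nu}_{m})\to(\hat{\bt},\nu^{\star}(\hat{\bt}))$, hence in particular $\tilde{\bt}_{m}\to\hat{\bt}$ almost surely. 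The single genuinely delicate step is the uniform strong law above: controlling $\tfrac{1}{m}\sum_{j}\exp\{f_{\bt}(\bm{r}_{j})\}/q(\bm{r}_{j})$ simultaneously over all $\bt\in\Theta$ is precisely what hypotheses~(i)--(iii) are designed to make possible, while everything else reduces to routine manipulation of the two-sided bound~(ii).
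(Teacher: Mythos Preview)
Your proposal is correct and follows essentially the same route as the paper: confine $\nu$ to a fixed compact interval, upgrade the pointwise convergence of Theorem~\ref{thm:logistic-reg-tends-to-IPP} to uniform convergence on $\Theta\times N$ via the same decomposition (deterministic observation part, $\log(1+x)\approx x$ bound on the reference part, and a Glivenko--Cantelli/uniform strong law driven by the Lipschitz assumption~(iii)), and conclude with the argmax lemma. The only cosmetic differences are that you localise $\nu$ through the first-order condition rather than through sandwich bounds on $\mathcal{R}^{m}$ as the paper does, and you spell out the well-separatedness of $(\hat{\bt},\nu^{\star}(\hat{\bt}))$ in more detail than the paper's one-line appeal to assumption~(iv).
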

\begin{proof}
See Appendix. 
\end{proof}
In particular, the limit of $\tilde{\bt}_{m}$ as $m\rightarrow+\infty$
has the same properties as the MLE of $\mathcal{L}(\bt)$, and thus
is consistent, and asymptotically efficient. The theorem above assumes
implicitly that the MLE of the logistic regression (with log-likelihood
$\mathcal{R}^{m}(\bt,\nu)$) is well defined, but this is a mild assumption:
e.g. if the considered model corresponds to an exponential family,
$f_{\bt}(\by)=\bt^{T}S(\by)$, then it is easy to check that $\mathcal{R}^{m}(\bt,\nu)$
is a concave function of $\left(\bt,\nu\right)$.

\subsection{Logistic likelihood as an approximation: non IID case}

Putting together Theorem 4 and the results in Section \ref{sub:Poisson-transform-non-IID}
leads to the following extension of noise-contrastive divergence to
non-IID problems. For an unnormalised Markov model $p_{\bt}(\by_{t}|\by_{t-1})\propto\exp\left\{ f_{\bt}(\by_{t}|\by_{t-1})\right\} $,
for data $\mathcal{S}=\left\{ \by_{1},\ldots,\by_{n}\right\} $, generate
$m=kn$ reference datapoints $\bm{r}_{jt}$ from kernel $q(\bm{r}_{jt}|\by_{t-1})$,
$j=1,\ldots,k$ (i.e. $k$ points $\bm{r}_{j}$ are generated from
ancestor $\by_{t-1}$, for each $t$), then fit the semi-parametric
logistic regresssion model that corresponds to the log odds ratio
function:
\begin{equation}
\log\frac{p\left(z=1\vert\bm{u}_{t-1},\bm{u}_{t}\right)}{p\left(z=0\vert\bm{u}_{t-1},\bm{u}_{t}\right)}=f_{\bt}(\bm{u}_{t}|\bm{u}_{t-1})+\chi(\bm{u}_{t-1})-\log q(\bm{u}_{t}|\bm{u}_{t-1})+\log(n/m)\label{eq:log-odds-seq}
\end{equation}
where $(\bm{u}_{t-1},\bm{u}_{t})$ represents a pair taken at random
from $\{(\by_{t-1},\by_{t})\}\cup\left\{ (\bm{y}_{t-1},\bm{r}_{jt})\right\} $.
The parameters of this logistic model are vector $\bt$, scalar $\nu$,
and function $\chi:\mathcal{Y}\rightarrow\mathbb{R}$, which is why
this model is indeed semi-parametric. In practice, fitting such a
model is easily achieved using an appropriate regulariser (we use
smoothing splines in our application). 

The interpretation of the above procedure follows the same lines as
in the previous section: for $m\rightarrow+\infty$, the log-likelihood
of this logistic model converges to that of the semi-parametric Poisson
model defined in Theorem \ref{thm:semi-parametricPoissonmodel}; in
particular, $\chi$ must be seen as an estimator of the (typically
smooth) function $\bm{y}_{t-1}\rightarrow-\log\int\exp\left\{ f_{\bt}(\by|\by_{t-1})\right\} \, d\by$. 

More generally, one may extend this approach to other non-IID models.
For instance, if $p_{\bt}(\by_{t})\propto\exp\left\{ f_{\bt}(\by_{t}|\bm{x}_{t})\right\} $,
where $\bm{x}_{t}$ are covariates, then fit the same type of semi-parametric
logistic regression as above, but with $\chi$ a function of covariates
$\bm{x}_{t}$.

\section{Applications to spatial Markov chains\label{sec:Application}}

\subsection{A toy example\label{sub:A-toy-example}}

We begin with a toy example that shows how inference based on the
Poisson Transform can be implemented in the non-IID case, and show
that semi-parametric inference using non-contrastive divergence can
be almost as efficient as maximum-likelihood (and much more efficient
than completely parametric non-contrastive divergence). In addition,
we will see that ignoring normalisation constants as done by \citet{MnihTeh:FastAndSimpleAlg}
and \citet{MnihKavukcuoglu:LearningWordEmbeddings} can lead to severe
bias. We have made available a detailed companion document for this
section, which includes all the code necessary to replicate our results
in R. 

Our toy example is a Markov chain in $[-1,1]$, with transition probability:
\begin{equation}
p_{\bt}(y_{t}\vert y_{t-1})\propto\exp\left\{ \theta_{1}y_{t}-\frac{1}{2}\theta_{2}\ (y_{t}-y_{t-1})^{2}\right\} \mathbb{I}_{[-1,1]}(y_{t}).\label{eq:toy-model}
\end{equation}

We picked this example because it is a simplified version of the spatial
Markov chains we study in the following section. Two realisations
from the chain are shown on Fig. \ref{fig:Two-realisations-from-toy}.

\begin{figure}
\begin{centering}
\includegraphics[width=8cm]{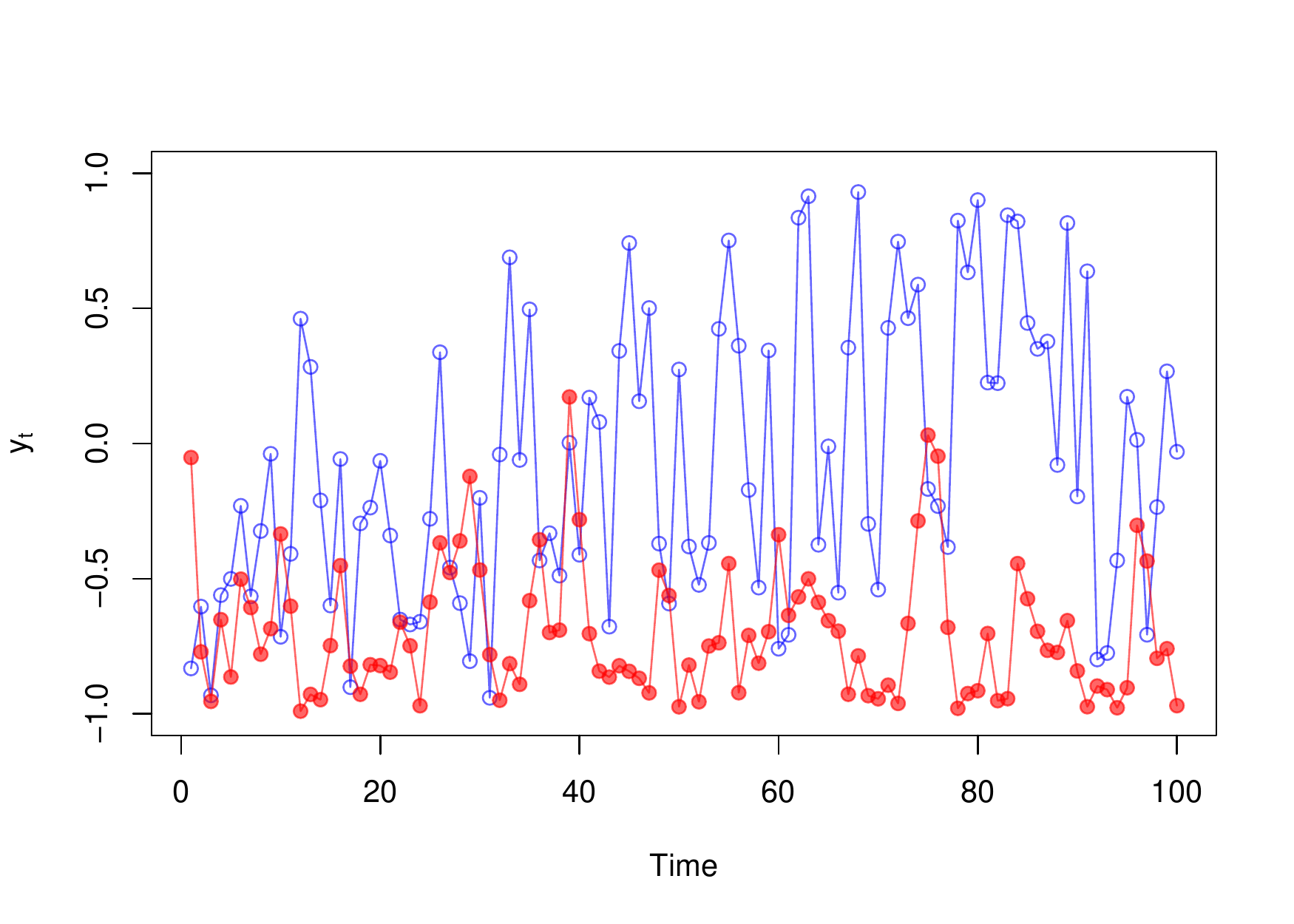}
\par\end{centering}

\protect\caption{Two realisations from the toy model, a Markov chain constrained to
the interval $[-1,1]$ (eq. \ref{eq:toy-model}). Red, unfilled dots:
$\theta_{1}=0,\,\theta_{2}=2$. Blue, solid dots: $\theta_{1}=-2,\,\theta_{2}=10$.
As expected, the second chain shows a bias towards negative values
as well as stronger autocorrelation.\label{fig:Two-realisations-from-toy}}
\end{figure}

In this one-dimensional example it is of course easy to compute the
normalisation constant using numerical integration, and thus maximum
likelihood inference is possible. To use non-constrastive divergence,
we need to pick a reference kernel, and here a uniform, IID distribution
does the job quite well: $q(y\vert y_{t-1})=\frac{1}{2}\mathbb{I}_{[-1,1]}(y).$

Positive examples for the logistic regression are formed from actual
pairs $\left(y_{t},y_{t-1}\right)$, negative examples are formed
from pairs $\left(r_{it},y_{t-1}\right)$, $i=1,\ldots,k$, i.e. one
replaces the actual value of $y_{t}$ with k uniform variates. Thus,
there are $k$ reference points per datapoint: $m=kn$.

We note $\left(u_{t},u_{t-1}\right)$ a generic point (either true
data, or reference data). The log-odds for the semi-parametric logistic
regression are then, injecting \eqref{eq:toy-model} into \eqref{eq:log-odds-seq}:
\begin{align}
m_{\bt}\left(u_{t}\right) & =\theta_{1}u_{t}-\frac{1}{2}\theta_{2}(u_{t}-u_{t-1})^{2}+\chi(u_{t-1})+\log(n/m)-\log\left(1/2\right),\nonumber \\
 & =\theta_{1}u_{t}+\theta_{2}d_{t}+\chi(u_{t-1})+\mbox{cst}.\label{eq:log-odds-toy}
\end{align}
where $d_{t}=\frac{1}{2}(u_{t}-u_{t-1})^{2}$. From a practical perspective,
the logistic regression can be performed with $u_{t}$ and $d_{t}$
entering as linear effects, and $\chi(u_{t-1})$ as a smooth, nonlinear
effect. The constant term may be added as an offset for completeness.
(It makes no practical difference since it can be absorbed into $\chi\left(u_{t-1}\right)$
or the intercept. One needs to include it only if intercepts are penalised.). 

The completely parametric variant of \eqref{eq:log-odds-toy} corresponds
to having a different intercept for every value of $u_{t-1}$. Alternatively,
neglecting the normalisation constants means replacing $\chi(u_{t-1})$
with an intercept term (or, put another way, forcing $\chi\left(u_{t-1}\right)$
to be constant). Semiparametric inference can be performed using R
package \textsf{mgcv} \citep{Wood:GAMIntroductionWithR}. 

To measure the efficiency of the various estimation methods, we simulated
realisations of the chain at a fixed parameter setting of $\theta_{1}=-2,\theta_{2}=50$
for increasing $n$. We also used two different values of $k$ (the
ratio of reference points to real data), $k=10$ and $k=30$. On each
simulation we picked two parameter values at random: $\theta_{1}\sim\mathcal{U}(-1,1),\theta_{2}\sim\mathcal{U}(\frac{1}{10},10)$,
generated $n$ datapoints, and obtained the 4 different estimates.
We used 300 repetitions for each value of $n$ and $k$. Results are
shown on Fig. \ref{fig:Results-estimation-toy}. 

Semiparametric inference performs almost as well as ML. Fully parametric
inference is much more variable, although it becomes better for larger
values of $k$. Indeed, theory predicts that it for large enough $k$
it becomes equivalent to ML. The variant of non-contrastive divergence
which neglects the normalisation constants performs quite well for
$\theta_{2}$ but shows asymptotic bias in $\theta_{1}$. The bias
comes from the missing non-linear effect $\chi\left(u_{t-1}\right)$,
which is projected on the linear effect for $u_{t}$. This happens
because the two are correlated through the dependencies in the chain.
Neglecting the normalisation constants then effectively leads to confounding. 

Contrary to the ideal Poisson transform (see correlary \ref{cor:Cost-doesnt-improve-beyond-lambda0}),
the non-contrastive divergence approximation is noisy and it is possible
to overfit the nonparametric term $\chi\left(u_{t-1}\right)$. Cross-validation
is a valid way of selecting the penalisation level, and here in practice
related criteria such as Generalised Cross-Validation and REML work
just as well. The results in Fig. \ref{fig:Results-estimation-toy}
are obtained using the default criterion (GCV). 

\begin{figure}
\begin{centering}
\includegraphics[width=14cm]{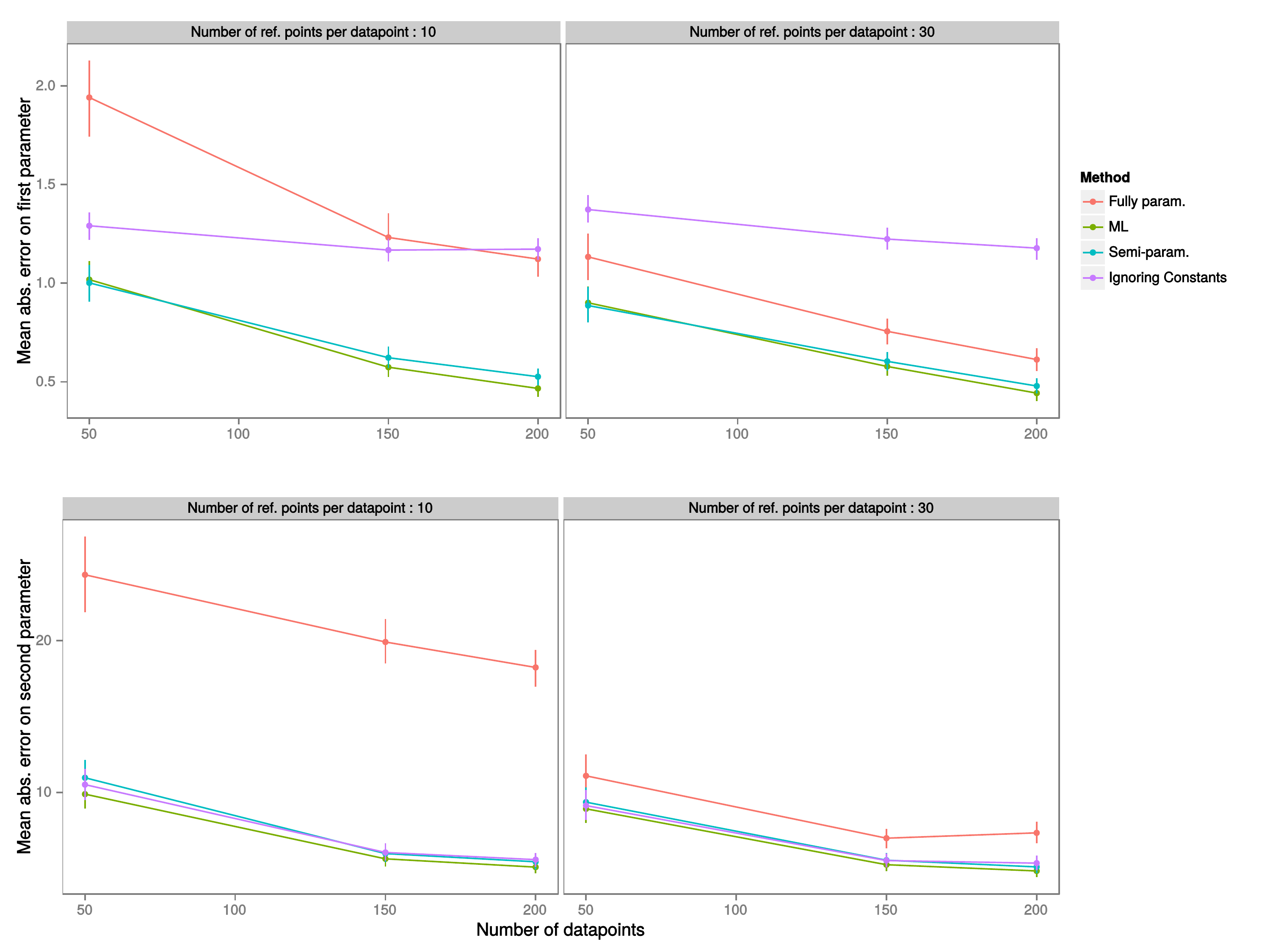}
\par\end{centering}

\protect\caption{Estimation errors of ML vs three variants of NCD for the one-dimensional
Markov chain. The variants are: fully parametric (one $\nu_{i}$ term
per datapoint), semi-parametric (normalisation constants are modelled
as a smooth function), ignoring constants (logistic regression with
a single intercept, as in \citealp{MnihTeh2012:FastSimpleAlgorithmTrainingNeuralProb}).\textbf{
}The semi-parametric estimate is almost as good as the ML estimate
across the board. The fully parametric estimate performs very poorly
when there are few reference points per datapoint (compare the red
line across the left and right panels). Finally, neglecting normalisation
constants leads to an non-convergent estimator of $\theta_{1}$, although
performance on $\theta_{2}$ is very good. See the companion document
for a more thorough discussion of this phenomenon.\label{fig:Results-estimation-toy}}
\end{figure}

\subsection{Spatial Markov chains for eye movement data\label{sub:Spatial-Markov-chains}}

\begin{figure}
\begin{centering}
\includegraphics[height=4cm]{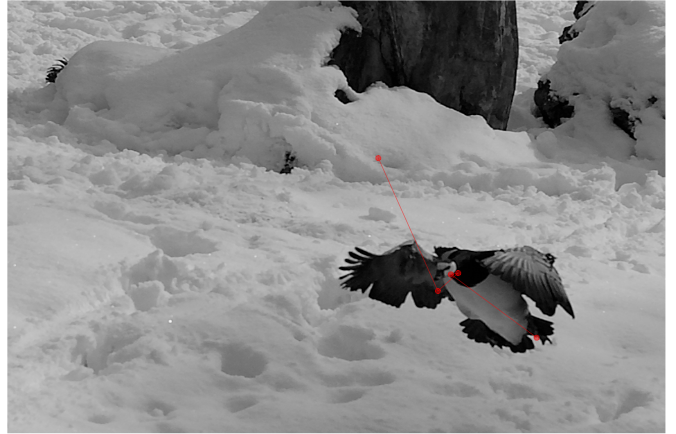}
\par\end{centering}

\protect\caption{A sequence of eye movements extracted from the dataset of \citet{Kienzle:CenterSurroundPatternsOptimalPredictors}.
Fixation locations are in red and successive locations are linked
by a straight line. \label{fig:Eye-movement-sequence}}
\end{figure}

A perennial problem in spatial statistics is to predict where certain
events are likely to take place (for example, cases of malaria in
a country) given past occurences and a set of spatial predictors (for
example, availability of mosquito nets). Point process models can
be used in such contexts, and one important class of applications
is to eye movement data \citep{Barthelme:ModelingFixationLocationsSpatialPointProc},
where the goal is to predict which locations people will look at in
a given visual stimulus (for example a photograph). Eye movements
are reliably drawn to certain features in a stimulus, but also exhibit
dependencies \citep{Engbert:SpatialStatsAndAttentionalDynSceneViewing},
and the most important of these is that we tend not to move our eyes
very much. If we are currently fixating on the bottom-left corner
of the screen, it will take a few steps for us to go look in the upper
right, even if there is something rather interesting there. 

The presence of dependencies motivates the introduction of models
of eye movements as \emph{spatial Markov chains. }Here we note $\mathbf{y}_{t}$
the fixation location at time $t$, and use a log-linear form for
the kernel:
\begin{equation}
p\left(\mathbf{y}_{t}|\mathbf{y}_{t-1}\right)\propto\exp\left\{ s\left(\mathbf{y}_{t}\right)+r\left(\mathbf{y}_{t},\mathbf{y}_{t-1}\right)\right\} \label{eq:spatial-markov-model}
\end{equation}
where $s(\mathbf{y}_{t})$ represents purely spatial factors, and
$r\left(\mathbf{y}_{t},\mathbf{y}_{t-1}\right)$ is an interaction
term that represents spatial dependencies. A well-known factor affecting
fixation locations is the centrality bias \citep{TatlerVincent:BehavBiasesEyeGuidance},
a preference for looking at central locations, and we take $s(\mathbf{y}_{t})$
to be a smooth function of $||\mathbf{y}_{t}||$ (the distance to
the center): $s(\by_{t})=s(\left\Vert \by_{t}\right\Vert )$. Potential
interactions between successive locations include a tendency not to
stray too far from the current location \citep{Engbert:SpatialStatsAndAttentionalDynSceneViewing},
and a tendency for making movements along the cardinal axes (vertical
and horizontal, \citealp{Foulsham:TurningTheWorldAround}). We therefore
further decompose $r\left(\mathbf{y}_{t},\mathbf{y}_{t-1}\right)$
into 
\begin{equation}
r\left(\mathbf{y}_{t},\mathbf{y}_{t-1}\right)=r_{\mathrm{dist}}\left(\left|\left|\mathbf{y}_{t}-\mathbf{y}_{t-1}\right|\right|\right)+r_{\mathrm{ang}}\left(\angle\left(\mathbf{y}_{t}-\mathbf{y}_{t-1}\right)\right)\label{eq:interaction-term}
\end{equation}
the sum of a distance and an angular component. We model the unknown
functions $s$, $r_{\mathrm{dist}}$ and $r_{\mathrm{ang}}$ non-parametrically,
using smoothing splines. The corresponding estimators are therefore
obtained by penalised likelihood maximisation, and the Poisson transform
extends straightforwardly to this case: replace the maximisation of
$\mathcal{L}(\bt)-\mathrm{pen(\bt)}$ by the maximisation of $\mathcal{M}(\bt,\chi)-\mathrm{pen(\bt)}$,
where $\bt=(s,r_{\mathrm{dist}},r_{\mathrm{ang}})$, and $\chi$ is
a non-parametric function used to estimate the normalising constant,
as explained in the previous section. 

We use the data of \citet{Kienzle:CenterSurroundPatternsOptimalPredictors},
who recorded eye movements while subjects where exploring a set of
photographs (Fig. \ref{fig:Eye-movement-sequence}). There are 14
subjects, each contributing between 600 and 2,000 datapoints. Thanks
to the techniques described above, the model described by \eqref{eq:spatial-markov-model}
can be turned into a logistic regression, and the R package \emph{mgcv
}\citep{Wood:GAMIntroductionWithR} can be used to estimate the different
components using smoothing splines. We used a uniform, IID reference
kernel $q(\mathbf{y}_{t}|\mathbf{y}_{t-1})=\left|\Omega\right|^{-1}$
to produce negative examples, with 20 times as many negative examples
as positive. Although the logistic approximation introduces Monte
Carlo variance, the estimates are very stable (see Appendix). We fit
separate functions for each subject to account for interindividual
variability. The results are shown on Fig. \ref{fig:results-eye-movements}.
We replicate known effects from the literature: central locations
dominate (although some subjects may display an off-center bias),
and dependencies include both a inhibitory effect of distance and
a preference for movements along cardinal orientations. 

Once the data have been put into a suitable format, model fitting
can be performed in one line of R code (see Appendix) and takes around
5 minutes on a normal desktop. The Poisson transform thus turns an
otherwise highly non-standard model into a convenient Generalised
Additive Model. 

\begin{figure}
\begin{centering}
\includegraphics[height=4cm]{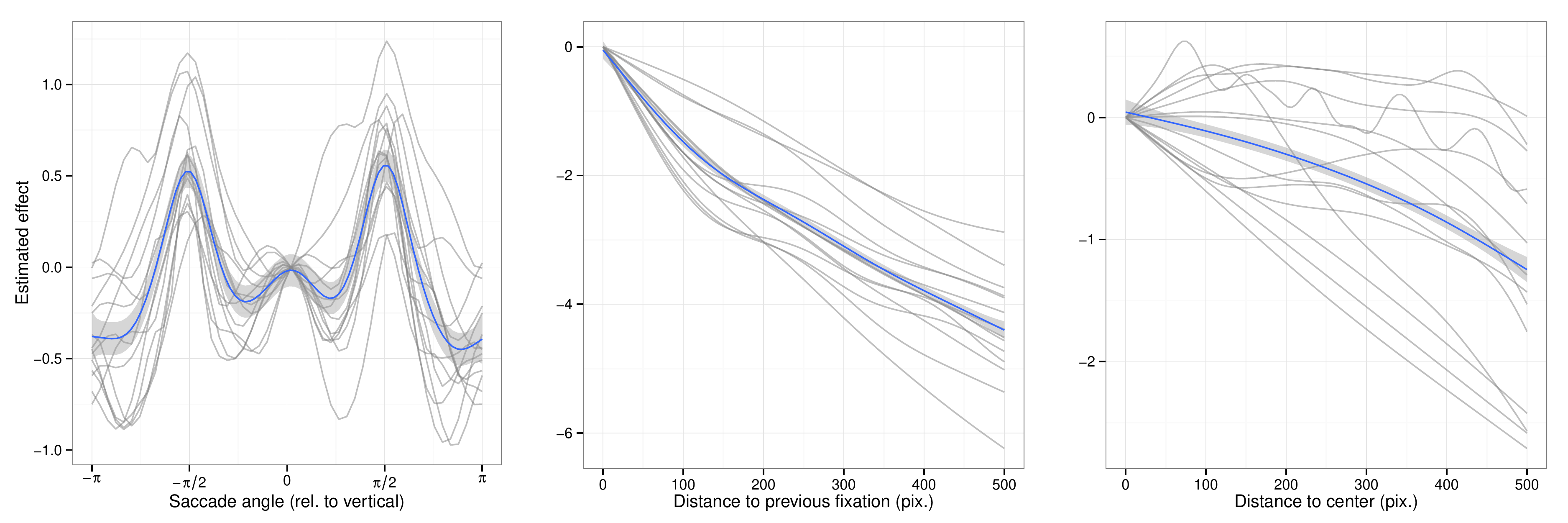}
\par\end{centering}

\protect\caption{Eye movement model. The smooth terms in eq. \ref{eq:spatial-markov-model}
and \ref{eq:interaction-term} are estimated using smoothing splines
by reducing the model to a non-parametric logistic regression. The
different panels display the estimated effects of saccade angle $(r_{ang})$,
distance to previous fixation ($r_{dist}$) and centrality bias $(s)$.
Individual subjects are in gray, and the group average is in blue.
\label{fig:results-eye-movements}}
\end{figure}

\section{Discussion}

The Poisson transform suggests a new way of thinking about inference
in unnormalised models: if we think of the data as coming from a point
process, the integration constant becomes just another parameter to
estimate. We have shown that the same idea extends to unnormalised
models in the sequential context and in the presence of covariates,
in which case parametric estimation may be turned into a semi-parametric
problem. Practical approximations of Poisson-transformed likelihoods
can be computed using Monte Carlo or using logistic likelihoods that
follow from a reinterpretation of noise-contrastive divergence. 

Part of the challenge in applying the Poisson transform to models
with high-dimensional covariates or dependencies on a high-dimensional
vector of past values will be in the design of appropriate kernels
for the non-parametric part, which corresponds to conditional normalisation
constants. The great advantage of the reduction to logistic regression
is that we will be able to leverage the existing literature on nonlinear
classification and dimensionality reduction, including recent developments
in hashing \citep{LiKoenig:TheoryApplicationsBbitMinwiseHashing}.
Inference in unnormalised models will probably always remain challenging,
but we believe the Poisson transform should alleviate some of the
difficulties.

\appendix

\section{Derivatives of Poisson-transformed likelihoods\label{sec:Derivatives-of-Poisson-transform}}

The first and second derivatives of $\L\left(\bt\right)$ and $\Mtn$
are needed in the proofs and we collect them here.

Derivatives of $\Lt$:

\begin{eqnarray*}
\mathcal{L}(\bt) & = & \sum_{i=1}^{n}f_{\bt}(\by_{i})-n\log\left(\int\mbox{exp}\left\{ f_{\bt}\left(s\right)\right\} \mbox{d}s\right):=\sum_{i=1}^{n}f_{\bt}(\by_{i})-n\phi\left(\bt\right)\\
\frac{\partial}{\partial\bt}\phi\left(\bt\right) & = & \int\frac{\partial}{\partial\bt}f_{\bt}(s)\mbox{exp}\left\{ f_{\bt}\left(s\right)-\phi\left(\bt\right)\right\} \mbox{d}s=E_{\bt}\left(\frac{\partial}{\partial\bt}f_{\bt}\right)\\
\frac{\partial^{2}}{\partial\bt^{2}}\phi\left(\bt\right) & = & E_{\bt}\left(\frac{\partial^{2}}{\partial\bt^{2}}f_{\bt}\right)+E_{\bt}\left(\left(\frac{\partial}{\partial\bt}f_{\bt}\right)\left(\frac{\partial}{\partial\bt}f_{\bt}\right)^{t}\right)-E_{\bt}\left(\frac{\partial}{\partial\bt}f_{\bt}\right)E_{\bt}\left(\frac{\partial}{\partial\bt}f_{\bt}\right)^{t}\\
\frac{\partial}{\partial\bt}\L & = & \sum_{i=1}^{n}\frac{\partial}{\partial\bt}f_{\bt}(\by_{i})-n\frac{d}{d\bt}\phi\left(\bt\right)\\
\frac{\partial^{2}}{\partial\bt^{2}}\L\left(\bt\right) & = & \sum\frac{\partial^{2}}{\partial\bt^{2}}f_{\bt}(\by_{i})-n\frac{d^{2}}{d\bt^{2}}\phi\left(\bt\right)
\end{eqnarray*}
where we have used $E_{\bt}$ as shorthand for the expectation with
respect to density $\mbox{exp}\left\{ f_{\bt}\left(s\right)-\phi\left(\bt\right)\right\} $. 

Derivatives of $\Mtn$:

\begin{eqnarray*}
\M\left(\bt,\nu\right) & = & \sum_{i=1}^{n}\left\{ f_{\bt}(\by_{i})+\nu\right\} -n\int\mbox{exp}\left\{ f_{\bt}\left(s\right)+\nu\right\} \mbox{d}s\\
\frac{\partial}{\partial\bt}\M\left(\bt,\nu\right) & = & \sum_{i=1}^{n}\frac{\partial}{\partial\bt}f_{\bt}(\by_{i})-nE_{\bt,\nu}\left(\frac{\partial}{\partial\bt}f_{\bt}\right)\\
\frac{\partial}{\partial\nu}\M\left(\bt,\nu\right) & = & n-n\int\mbox{exp}\left\{ f_{\bt}\left(s\right)+\nu\right\} \mbox{d}s\\
\frac{\partial^{2}}{\partial\bt^{2}}\M\left(\bt,\nu\right) & = & \sum\frac{\partial^{2}}{\partial\bt^{2}}f_{\bt}\left(\by_{i}\right)-n\left(E_{\bt,\nu}\left(\frac{\partial^{2}}{\partial^{2}\bt}f_{\bt}\right)+E_{\bt,\nu}\left(\left(\frac{\partial}{\partial\bt}f_{\bt}\right)\left(\frac{\partial}{\partial\bt}f_{\bt}\right)^{t}\right)\right)\\
\frac{\partial^{2}}{\partial\nu^{2}}\M\left(\bt,\nu\right) & = & -n\int\mbox{exp}\left\{ f_{\bt}\left(s\right)+\nu\right\} \mbox{d}s\\
\frac{\partial}{\partial\bt\partial\nu}\M\left(\bt,\nu\right) & = & -nE_{\bt,\nu}\left(\frac{\partial}{\partial\bt}f_{\bt}\right)
\end{eqnarray*}
where we have used $E_{\bt,\nu}$ as shorthand for the linear operator
$E_{\bt,\nu}(\varphi)=\int\varphi(s)\mbox{exp}\left\{ f_{\bt}\left(s\right)+\nu\right\} \,\mbox{d}s$
(which is not an expectation in general).

\section{Further properties of the Poisson transform\label{sec:Further-properties-of-Poisson-transform}}

\subsection{The Poisson transform preserves confidence intervals\label{sub:Preservation-confint}}

The usual method for obtaining confidence intervals for $\bt$ is
to invert the Hessian matrix of $\Lt$ at the mode, $\bts$: 

\[
\mathbf{C}_{\L}=\left(-\frac{d^{2}}{d^{2}\bt}\L\left|_{\bt=\bt^{\star}}\right.\right)^{-1}
\]
We can show that the same confidence intervals can be obtained from
$\Mtn$ at the joint mode, $\bts,\nu^{\star}$.

At the joint maximum, $\nu^{\star}$ normalises the intensity function,
and the Hessian of $\M$ equals:

\begin{eqnarray*}
H & = & \left[\begin{array}{cc}
\mathbf{H}_{aa} & \mathbf{H}_{ba}\\
\mathbf{H}_{ab} & \mathbf{H}_{bb}
\end{array}\right]=\left[\begin{array}{cc}
\frac{\partial^{2}}{\partial^{2}\bt}\M\left(\bt,\nu\right) & \frac{\partial}{\partial\bt\partial\nu}\M\left(\bt,\nu\right)\\
\frac{\partial}{\partial\nu\partial\bt}\M\left(\bt,\nu\right) & \frac{\partial^{2}}{\partial^{2}\nu}\M\left(\bt,\nu\right)
\end{array}\right]\\
 & = & \left[\begin{array}{cc}
\sum\frac{\partial^{2}}{\partial\bt^{2}}f_{\bt}\left(\by_{i}\right)-nE_{\bt}\left(\frac{\partial^{2}}{\partial^{2}\bt}f\right)-nE_{\bt}\left(\left(\frac{\partial}{\partial\bt}f_{\bt}\right)\left(\frac{\partial}{\partial\bt}f_{\bt}\right)^{t}\right) & -nE_{\bt}\left(\frac{\partial}{\partial\bt}f_{\bt}\right)^{t}\\
-nE_{\bt}\left(\frac{\partial}{\partial\bt}f_{\bt}\right) & -n
\end{array}\right]
\end{eqnarray*}
where again $E$ denotes the expectation with respect to density $\mbox{exp}\left\{ f_{\bt}\left(s\right)-\phi\left(\bt\right)\right\} $. 

Inverting $-H$ also yields confidence intervals. By the inversion
rule for block matrices, the approximate covariance for $\bt$ using
$\M\left(\bt,\nu\right)$ equals

\begin{eqnarray*}
\mathbf{C}_{\M}^{-1} & = & -\left(\mathbf{H}_{aa}-\mathbf{H}_{ba}\mathbf{H}_{bb}^{-1}\mathbf{H}_{ab}\right)\\
 & = & -\left(\mathbf{H}_{aa}+\frac{1}{n}n^{2}E_{\bt}\left(\frac{\partial}{\partial\bt}f_{\bt}\right)E_{\bt}\left(\frac{\partial}{\partial\bt}f_{\bt}\right)^{t}\right)\\
 & = & -\Bigg[\sum\frac{\partial^{2}}{\partial\bt^{2}}f_{\bt}\left(\by_{i}\right)-nE_{\bt}\left(\frac{\partial^{2}}{\partial^{2}\bt}f\right)\\
 &  & -nE_{\bt}\left(\left(\frac{\partial}{\partial\bt}f_{\bt}\right)\left(\frac{\partial}{\partial\bt}f_{\bt}\right)^{t}\right)+nE_{\bt}\left(\frac{\partial}{\partial\bt}f_{\bt}\right)E\left(\frac{\partial}{\partial\bt}f_{\bt}\right)^{t}\Bigg]\\
 & = & \mathbf{C}_{\L}^{-1}
\end{eqnarray*}

\subsection{Preservation of log-concavity in exponential families\label{sub:Preservation-log-concavity}}

In exponential families, the log-likelihood is concave, which facilitates
inference. The Poisson transform preserves this log-concavity.

In the natural parameterisation, exponential-family models are given
by:

\[
\Lt=\exp\left\{ \sum_{i=1}^{n}s(\mathbf{y}_{i})^{t}\bt-\phi\left(\bt\right)\right\} 
\]
with $s(\mathbf{y})$ a vector of sufficient statistics. The second
derivative of $\Lt$ simplifies to:
\begin{eqnarray*}
-\frac{1}{n}\frac{\partial}{\partial^{2}\bt}\Lt & = & \int s(\mathbf{y})s(\mathbf{y})^{t}\exp\left(s(\mathbf{y})^{t}\bt-\phi\left(\bt\right)\right)\\
 & = & E_{\bt}\left\{ s(\mathbf{y})s(\mathbf{y})^{t}\right\} 
\end{eqnarray*}
a p.s.d. matrix, which establishes concavity. 

The second derivatives of $\Mtn$ (Section \ref{sec:Derivatives-of-Poisson-transform})
also simplify
\begin{eqnarray*}
-\frac{1}{n}\frac{\partial^{2}}{\partial\bt^{2}}\Mtn & = & \exp\left\{ \nu-\nu^{\star}\left(\bt\right)\right\} \int s(\mathbf{y})s(\mathbf{y})^{t}\exp\left(s(\mathbf{y})^{t}\bt-\phi\left(\bt\right)\right)\\
-\frac{1}{n}\frac{\partial^{2}}{\partial\nu\partial\bt}\Mtn & = & \exp\left\{ \nu-\nu^{\star}\left(\bt\right)\right\} \int s(\mathbf{y})\exp\left(s(\mathbf{y})^{t}\bt-\phi\left(\bt\right)\right)\\
-\frac{1}{n}\frac{\partial^{2}}{\partial\nu{}^{2}}\Mtn & = & \exp\left\{ \nu-\nu^{\star}\left(\bt\right)\right\} 
\end{eqnarray*}
so that the full Hessian $\mathbf{H}$ can be written in block-form
as:

\[
-\frac{1}{n}\exp\left\{ \nu^{\star}\left(\bt\right)-\nu\right\} \mathbf{H}=\left[\begin{array}{cc}
E_{\bt}\left\{ s\left(\mathbf{y}\right)s\left(\mathbf{y}\right)^{t}\right\}  & E\left(s\left(\mathbf{y}\right)\right)\\
E_{\bt}\left\{ s\left(\mathbf{y}\right)^{t}\right\}  & 1
\end{array}\right]=\mathbf{A}
\]
and $\mathbf{H}$ is n.s.d if and only if for all $\mathbf{x},c$
such that $(\bm{x},c)\neq\bm{0}$:

\[
\left[\begin{array}{cc}
\mathbf{x}^{t} & c\end{array}\right]\mathbf{A}\left[\begin{array}{c}
\mathbf{x}\\
c
\end{array}\right]>0
\]
which the following establishes: 

\begin{align*}
 & \left[\begin{array}{cc}
\mathbf{x}^{t} & c\end{array}\right]\left[\begin{array}{cc}
E_{\bt}\left\{ s\left(\mathbf{y}\right)s\left(\mathbf{y}\right)^{t}\right\}  & E\left\{ s\left(\mathbf{y}\right)\right\} \\
E_{\bt}\left\{ s\left(\mathbf{y}\right)^{t}\right\}  & 1
\end{array}\right]\left[\begin{array}{c}
\mathbf{x}\\
c
\end{array}\right]\\
= & \mbox{\ensuremath{\left[\begin{array}{cc}
 \mathbf{x}^{t}  &  c\end{array}\right]}}\left[\begin{array}{c}
E_{\bt}\left\{ s\left(\mathbf{y}\right)s\left(\mathbf{y}\right)^{t}\right\} \mathbf{x}+cE_{\bt}\left\{ s\left(\mathbf{y}\right)\right\} \\
E_{\bt}\left\{ s\left(\mathbf{y}\right)^{t}\right\} \mathbf{x}+c
\end{array}\right]\\
= & E_{\bt}\left\{ \mathbf{x}^{t}s\left(\mathbf{y}\right)s\left(\mathbf{y}\right)^{t}\mathbf{x}\right\} +2E_{\bt}\left\{ \mathbf{x}^{t}s\left(\mathbf{y}\right)\right\} c+c^{2}\\
= & E_{\bt}\left[\left(s\left(\mathbf{y}\right)^{t}\mathbf{x}+c\right)^{2}\right]>0
\end{align*}
assuming $E_{\bt}\left\{ s(\by)s(\by)^{t}\right\} $ is p.s.d. for
all $\bt$.

\subsection{Noise-constrative divergence approximates the Poisson transform (Theorem
\ref{thm:logistic-reg-tends-to-IPP})}

We have assumed that 

\[
f_{\bt}(\by)-\log q(\by)\leq C(\bt)
\]
for a certain constant $C(\bt)$ that may depend on $\bt$, and all
$\by\in\Omega$. We rewrite the log-odds ratio as $h(\by)-\log(m)$
where 
\[
h(\by):=f_{\bt}(\by)+\nu-\log q(\by)+\log(n)
\]
does not depend on $m$; note $h(\by)\leq\bar{h}:=C(\bt)+\nu+\log(n)$.
One has: 
\begin{align*}
\mathcal{R}^{m}(\bt,\nu)+\log(m/n)= & \sum_{i=1}^{n}\log\left[\frac{m\exp\left\{ f_{\bt}(\by_{i})+\nu\right\} }{n\exp\left\{ f_{\bt}(\by_{i})+\nu\right\} +mq(\by_{i})}\right]\\
 & +\sum_{j=1}^{m}\log\left[\frac{mq(\bm{r}_{j})}{n\exp\left\{ f_{\bt}(\bm{r}_{j})+\nu\right\} +mq(\bm{r}_{j})}\right]
\end{align*}
where the first term trivially converges (as $m\rightarrow+\infty$)
to 
\[
\sum_{i=1}^{n}\left\{ f_{\bt}(\by_{i})+\nu-\log q(\by_{i})\right\} .
\]
Regarding the second term, one has:
\[
\log\left[\frac{mq(\bm{r}_{j})}{n\exp\left\{ f_{\bt}(\bm{r}_{j})+\nu\right\} +mq(\bm{r}_{j})}\right]=\log\left[1-\frac{1}{1+m\exp\left\{ -h(\bm{r}_{j})\right\} }\right]
\]
where
\[
0\leq\frac{1}{1+m\exp\left\{ -h(\bm{r}_{j})\right\} }\leq\frac{1}{m}\exp(\bar{h}).
\]
Since $\left|\log(1-x)+x\right|\leq x^{2}$ for $x\in[0,1/2]$, we
have, for $m$ large enough, that
\begin{equation}
\left|\log\left[\frac{mq(\bm{r}_{j})}{n\exp\left\{ f_{\bt}(\bm{r}_{j})+\nu\right\} +mq(\bm{r}_{j})}\right]+\frac{1}{1+m\exp\left\{ -h(\bm{r}_{j})\right\} }\right|\leq\frac{\exp(2\bar{h})}{m^{2}}\label{eq:second_bound}
\end{equation}
and 
\[
\left|\frac{1}{1+m\exp\left\{ -h(\bm{r}_{j})\right\} }-\frac{1}{m}\exp\left\{ h(\bm{r}_{j})\right\} \right|\leq\frac{\exp(2\bar{h})}{m^{2}}
\]
and since, by the law of large numbers, 
\begin{equation}
\frac{1}{m}\sum_{j=1}^{m}\exp\left\{ h(\bm{r}_{i})\right\} \rightarrow\mathbb{E}_{q}[\exp\left\{ h(\bm{r}_{i})\right\} ]=n\int\exp\left\{ f_{\bt}(\by)+\nu\right\} \, d\by<+\infty\label{eq:LLN}
\end{equation}
almost surely as $m\rightarrow+\infty$, one also has:
\[
\sum_{j=1}^{m}\log\left[\frac{mq(\by_{i})}{n\exp\left\{ f_{\bt}(\by_{i})+\eta\right\} +mq(\by_{i})}\right]\rightarrow-n\int\exp\left\{ f_{\bt}(\by)+\nu\right\} \, d\by
\]
almost surely, since the difference between the two sums is bounded
deterministically by $\exp(2\bar{h})/m$.

\subsection{Uniform convergence of the noise-constrative divergence (Theorem
\ref{thm:MLEconverges})}

We first prove two intermediate results. 
\begin{lem}
\label{lem:bound_nu}Assuming that $\left|f_{\bt}(\by)-\log q(\by)\right|\le C$
for all $\by\in\Omega$, then there exists a bounded interval $I$
such that, for any $\bt$, the maximum of both functions $\nu\rightarrow\M(\bt,\nu)$
and $\nu\rightarrow\mathcal{R}^{m}(\bt,\nu)$ is attained in $I$. \end{lem}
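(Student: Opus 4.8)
The plan is to exploit that, for each fixed $\bt$, both maps $\nu\mapsto\M(\bt,\nu)$ and $\nu\mapsto\mathcal{R}^{m}(\bt,\nu)$ are smooth in $\nu$ (and in fact strictly concave --- for $\M$ this is immediate from the derivative formulas collected in Appendix~\ref{sec:Derivatives-of-Poisson-transform}, and for $\mathcal{R}^{m}$ it is the familiar concavity of a logistic-regression log-likelihood in $\nu$), so that it suffices to control the sign of the $\nu$-derivative: I want to show it is strictly negative once $\nu$ is large enough and strictly positive once $\nu$ is small enough, with thresholds not depending on $\bt$. I claim that $I=[-C,C]$ works --- and, as a bonus, that these thresholds do not depend on $m$ or $n$ either, which is exactly the property the uniform-convergence argument for Theorem~\ref{thm:MLEconverges} will rely on.

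For $\M$ this is routine: differentiating \eqref{eq:poisson-transformed-loglik} gives $\partial_{\nu}\M(\bt,\nu)=n\bigl(1-e^{\nu}Z(\bt)\bigr)$ with $Z(\bt)=\int_{\Omega}\exp\{f_{\bt}(\by)\}\,\mathrm{d}\by$, and the hypothesis $|f_{\bt}(\by)-\log q(\by)|\le C$ together with $\int_{\Omega}q(\by)\,\mathrm{d}\by=1$ sandwiches $e^{-C}\le Z(\bt)\le e^{C}$. Hence the unique stationary point $\nu^{\star}(\bt)=-\log Z(\bt)$ lies in $[-C,C]$; equivalently $\partial_{\nu}\M(\bt,\nu)<0$ for $\nu>C$ and $>0$ for $\nu<-C$.

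The substance is the same statement for $\mathcal{R}^{m}$. Writing $\sigma(x)=(1+e^{-x})^{-1}$ and, as in the proof of Theorem~\ref{thm:logistic-reg-tends-to-IPP}, $h(\by)=f_{\bt}(\by)+\nu-\log q(\by)+\log n$, one computes
\[
\partial_{\nu}\mathcal{R}^{m}(\bt,\nu)=\sum_{i=1}^{n}\frac{m}{m+e^{h(\by_{i})}}-\sum_{j=1}^{m}\frac{e^{h(\bm{r}_{j})}}{m+e^{h(\bm{r}_{j})}}.
\]
The key observation is that the hypothesis forces $ne^{\nu-C}\le e^{h(\by)}\le ne^{\nu+C}$ for \emph{every} $\by\in\Omega$, hence simultaneously for all $\by_{i}$ and all reference points $\bm{r}_{j}$. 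Since $t\mapsto t/(m+t)$ is increasing, I would bound the first sum from above and the second from below by inserting these uniform bounds: the denominators then coincide and collapse, yielding $\partial_{\nu}\mathcal{R}^{m}(\bt,\nu)\le nm\bigl(1-e^{\nu-C}\bigr)/\bigl(m+ne^{\nu-C}\bigr)$, which is $<0$ once $\nu>C$, and symmetrically $\partial_{\nu}\mathcal{R}^{m}(\bt,\nu)\ge nm\bigl(1-e^{\nu+C}\bigr)/\bigl(m+ne^{\nu+C}\bigr)>0$ once $\nu<-C$. Therefore $\mathcal{R}^{m}(\bt,\cdot)$ is strictly increasing on $(-\infty,-C)$ and strictly decreasing on $(C,+\infty)$, so its supremum over $\R$ coincides with its maximum over the compact interval $[-C,C]$ --- attained there by continuity --- and every maximiser lies in $[-C,C]$. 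The analogous conclusion for $\M(\bt,\cdot)$ was obtained in the previous paragraph, so $I=[-C,C]$ serves both, and the lemma follows.

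I expect the only delicate point to be resisting the temptation to bound each logistic term individually in $\partial_{\nu}\mathcal{R}^{m}$: the naive bound leaks a factor that grows with $m$ and would produce an interval $I$ depending on $m$. The fix is to keep the denominators intact and feed in the two-sided bound on $e^{h(\by)}$, so that the two sums acquire matching denominators that cancel cleanly --- this is precisely what makes the interval uniform in $m$ (and $n$), as needed downstream.
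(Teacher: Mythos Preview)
Your argument is correct. For $\M$ you match the paper exactly; for $\mathcal{R}^{m}$ you go a different route. The paper sandwiches the \emph{function value} $\mathcal{R}^{m}(\bt,\nu)$ between two $\bt$-free functions $l(\nu)\le\mathcal{R}^{m}(\bt,\nu)\le u(\nu)$ obtained by plugging the two-sided bound $e^{-C}q\le\exp f_{\bt}\le e^{C}q$ into each logistic term, observes that both $l$ and $u$ diverge to $-\infty$ as $\nu\to\pm\infty$, and then chooses $J$ so that outside $J$ one has $u(\nu)<\sup_{\nu}l(\nu)\le\sup_{\nu}\mathcal{R}^{m}(\bt,\nu)$; the final interval is $I=J\cup[-C,C]$. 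You instead control the sign of $\partial_{\nu}\mathcal{R}^{m}$ directly, feeding the uniform bound $ne^{\nu-C}\le e^{h(\by)}\le ne^{\nu+C}$ into both sums so the denominators match and cancel.

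Your route buys a bit more: it yields the explicit interval $I=[-C,C]$, the same one that already works for $\M$, and visibly independent of $m$ and $n$. In the paper's argument the envelopes $l,u$ depend on $m$ (and $n$), so the resulting $J$ is a priori $m$-dependent; one would need an extra limiting remark to see that a single interval works for all $m$, which is what the subsequent uniform-convergence lemma actually uses. Your derivative-sign argument makes that uniformity immediate.
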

\begin{proof}
Let $\bt$ some fixed value. $\M(\bt,\nu)$ is maximised at $\nu^{\star}(\bt)=-\log\int_{\Omega}\exp\left\{ f_{\bt}(\by)\right\} \, d\by\in\left[-C,C\right]$,
since $e^{-C}q\leq f_{\bt}\leq e^{C}q$. For $\mathcal{R}^{m}(\bt,\nu)$,
using again $e^{-C}q\leq f_{\bt}\leq e^{C}q$, one sees that $l(\nu)\leq\mathcal{R}^{m}(\bt,\nu)\leq u(\nu)$,
where $l$ and $u$ are functions of $\nu$ that diverges at $-\infty$
for both $\nu\rightarrow+\infty$ and $\nu\rightarrow-\infty$; i.e.
\begin{align*}
\mathcal{R}^{m}(\bt,\nu)+\log(m/n)\leq u(\nu):= & \sum_{i=1}^{n}\log\left[\frac{m\exp\left\{ C+\nu\right\} }{n\exp(-C+\nu)+m}\right]\\
 & +\sum_{j=1}^{m}\log\left[\frac{m}{n\exp\left\{ -C+\nu\right\} +m}\right]
\end{align*}
and the lower bound $l(\nu)$ has a similar expression. Thus one may
construct an interval $J$ such that the maximum of function $\nu\rightarrow\mathcal{R}^{m}(\bt,\nu)$
is attained in $J$ for all $\bt$ (e.g. take $J$ such that for $\nu\in J^{c}$,
$u(\nu)\leq M_{l}/2$, $l(\nu)\leq M_{l}/2$, with $M_{l}=\sup_{\nu}l$)
. To conclude, take $I=J\cup[-C,C]$. 
\end{proof}
We now establish uniform convergence, but, in light of the previous
result, we restrict $\nu$ to the interval $I$ defined in Lemma \ref{lem:bound_nu}. 
\begin{lem}
Under the Assumptions that (i) $\Theta$ is bounded, that (ii) $\left|f_{\bt}(\by)-\log q(\by)\right|\le C$
for all $\by\in\Omega$, that (iii) $\left|f_{\bt}(\by)-f_{\bt'}(\by)\right|\leq\kappa(\by)\left\Vert \bt-\bt'\right\Vert $
with $\kappa$ such that $\E_{q}[\kappa]<\infty$, one has, for fixed
$\mathcal{S}=\left\{ \by_{1},\ldots,\by_{n}\right\} $:
\begin{equation}
\sup_{\left(\bt,\nu\right)\in\Theta\times I}\left|\mathcal{R}^{m}(\bt,\nu)+\log(m/n)+\sum_{i=1}^{n}\log q(\bm{y}_{i})-\M(\bt,\nu)\right|\rightarrow0\label{eq:uniform_conv}
\end{equation}
almost surely, relative to the randomness induced by $\mathcal{R}=\left\{ \bm{r}_{1},\ldots,\bm{r}_{m}\right\} .$ \end{lem}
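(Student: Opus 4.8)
The plan is to decompose the bracketed quantity in \eqref{eq:uniform_conv} exactly as in the pointwise proof of Theorem~\ref{thm:logistic-reg-tends-to-IPP}, writing $\mathcal{R}^{m}(\bt,\nu)+\log(m/n)=A_{m}(\bt,\nu)+B_{m}(\bt,\nu)$ with
\[
A_{m}(\bt,\nu)=\sum_{i=1}^{n}\log\frac{m\exp\left\{ f_{\bt}(\by_{i})+\nu\right\} }{n\exp\left\{ f_{\bt}(\by_{i})+\nu\right\} +mq(\by_{i})},\qquad B_{m}(\bt,\nu)=\sum_{j=1}^{m}\log\frac{mq(\bm{r}_{j})}{n\exp\left\{ f_{\bt}(\bm{r}_{j})+\nu\right\} +mq(\bm{r}_{j})},
\]
and showing separately that $A_{m}(\bt,\nu)+\sum_{i}\log q(\by_{i})\to\sum_{i}\{f_{\bt}(\by_{i})+\nu\}$ and $B_{m}(\bt,\nu)\to-n\int_{\Omega}\exp\{f_{\bt}(\by)+\nu\}\,d\by$, both uniformly over $(\bt,\nu)\in\Theta\times I$. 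The first statement is deterministic and easy: dividing numerator and denominator by $m$ gives $A_{m}(\bt,\nu)=\sum_{i}\{f_{\bt}(\by_{i})+\nu-\log(q(\by_{i})+\tfrac{n}{m}\exp\{f_{\bt}(\by_{i})+\nu\})\}$, and Assumption (ii) together with the boundedness of $I$ bound $\tfrac{n}{m}\exp\{f_{\bt}(\by_{i})+\nu\}\le\tfrac{n}{m}e^{C+\sup I}q(\by_{i})$, so the logarithm differs from $\log q(\by_{i})$ by at most $\log(1+\tfrac{n}{m}e^{C+\sup I})$, uniformly in $(\bt,\nu)$ and over the finite index set $i=1,\dots,n$.

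For $B_{m}$ I would re-run the appendix estimates for Theorem~\ref{thm:logistic-reg-tends-to-IPP} while tracking uniformity. With $h_{\bt,\nu}(\by):=f_{\bt}(\by)+\nu-\log q(\by)+\log n$, Assumption (ii) and the boundedness of $I$ give the uniform bound $h_{\bt,\nu}(\by)\le\bar{h}:=C+\sup_{\nu\in I}\nu+\log n$ on $\Omega\times\Theta\times I$. Using $|\log(1-x)+x|\le x^{2}$ on $[0,\tfrac12]$ and $|x-\tfrac1m e^{h}|\le e^{2h}/m^{2}$ for $x=(1+me^{-h})^{-1}$, one obtains for $m\ge 2e^{\bar{h}}$ the deterministic bound
\[
\sup_{(\bt,\nu)\in\Theta\times I}\left|B_{m}(\bt,\nu)+\frac{1}{m}\sum_{j=1}^{m}g_{\bt,\nu}(\bm{r}_{j})\right|\le\frac{2e^{2\bar{h}}}{m},\qquad g_{\bt,\nu}(\by):=n\exp\left\{ f_{\bt}(\by)-\log q(\by)+\nu\right\},
\]
and since $\E_{q}[g_{\bt,\nu}]=n\int_{\Omega}\exp\{f_{\bt}(\by)+\nu\}\,d\by$, everything reduces to the uniform law of large numbers $\sup_{(\bt,\nu)\in\Theta\times I}\big|\tfrac1m\sum_{j}g_{\bt,\nu}(\bm{r}_{j})-\E_{q}[g_{\bt,\nu}]\big|\to0$ almost surely.

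This uniform LLN is the heart of the matter, and I would prove it by a standard finite-cover (Glivenko--Cantelli) argument using Assumptions (i)--(iii). Assumption (ii) confines the exponent of $g_{\bt,\nu}$ to a fixed bounded interval, so $g_{\bt,\nu}$ is uniformly bounded; combining this with the mean value theorem for $\exp$ and Assumption (iii) yields $|g_{\bt,\nu}(\by)-g_{\bt',\nu'}(\by)|\le L(\kappa(\by)\|\bt-\bt'\|+|\nu-\nu'|)$ for a fixed $L$, with $\E_{q}[\kappa]<\infty$. Cover the (totally bounded) set $\Theta\times I$ by finitely many balls of radius $\delta$ centred at points $(\bt_{k},\nu_{k})$; the ordinary SLLN gives $\tfrac1m\sum_{j}g_{\bt_{k},\nu_{k}}(\bm{r}_{j})\to\E_{q}[g_{\bt_{k},\nu_{k}}]$ and $\tfrac1m\sum_{j}\kappa(\bm{r}_{j})\to\E_{q}[\kappa]$ almost surely, simultaneously over the finitely many $k$. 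Inside ball $k$ the Lipschitz bound controls $\tfrac1m\sum_{j}g_{\bt,\nu}(\bm{r}_{j})$ up to $L\delta(1+\tfrac1m\sum_{j}\kappa(\bm{r}_{j}))$ and $\E_{q}[g_{\bt,\nu}]$ up to $L\delta(1+\E_{q}[\kappa])$, so $\limsup_{m}\sup_{\Theta\times I}|\cdots|\le 2L\delta(1+\E_{q}[\kappa])$ almost surely; letting $\delta\downarrow0$ along $\delta=1/\ell$ and intersecting countably many almost sure events gives the claim.

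The expected obstacle is making this cover argument watertight: one must check that the modulus of continuity of $g$ really has the stated form with a $q$-integrable envelope $\kappa$ (the $\nu$-direction requires the boundedness of $I$ supplied by Lemma~\ref{lem:bound_nu}, the $\bt$-direction requires (ii) to bound the exponential prefactor and (iii) to bound the increment of $f_{\bt}$), and that the finitely many exceptional null sets can be pooled without difficulty. Once the uniform LLN holds, adding back the uniform estimates for $A_{m}$ and $B_{m}$ immediately gives \eqref{eq:uniform_conv}.
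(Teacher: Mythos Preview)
Your proposal is correct and follows essentially the same route as the paper: the same decomposition into the data-point sum and the reference-point sum, the same deterministic uniform bounds coming from $|\log(1-x)+x|\le x^2$ together with the uniform upper bound $\bar h=C+\sup_{\nu\in I}\nu+\log n$, and the same reduction of the only random piece to a uniform law of large numbers for $g_{\bt,\nu}(\by)=n\exp\{f_{\bt}(\by)-\log q(\by)+\nu\}$ via the Lipschitz estimate $|g_{\bt,\nu}-g_{\bt',\nu'}|\le L(\kappa+1)\|(\bt,\nu)-(\bt',\nu')\|$ implied by (ii)--(iii). The only difference is cosmetic: the paper obtains the uniform LLN by invoking the Glivenko--Cantelli theorem (van der Vaart, Theorem~19.4 and Example~19.7) for this Lipschitz class, whereas you reprove that special case by the standard finite $\delta$-cover argument; both are equivalent here.
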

\begin{proof}
Recall that the absolute difference above was bounded by the sum of
three terms in the previous Appendix. The first term was 
\[
\sum_{i=1}^{n}\left[\log\left[\frac{m\exp\left\{ f_{\bt}(\by_{i})+\nu\right\} }{n\exp\left\{ f_{\bt}(\by_{i})+\nu\right\} +mq(\by_{i})}\right]-\left\{ f_{\bt}(\by_{i})+\nu-\log q(\by_{i})\right\} \right]
\]
which clearly converges deterministically to $0$ as $m\rightarrow+\infty.$
In addition, this convergence is uniform with respect to $\left(\bt,\nu\right)\in\Theta\times I$,
since $\left|\log x-\log y\right|\leq c\left|x-y\right|$ for $x,y\geq1/c$,
and here, by Assumption (ii), 
\[
x:=\frac{m\exp\left\{ f_{\bt}(\by_{i})+\nu\right\} }{n\exp\left\{ f_{\bt}(\by_{i})+\nu\right\} +mq(\by_{i})}\geq\frac{m\exp\left\{ -C+\nu\right\} }{n\exp\left\{ C+\nu\right\} +m}\geq\exp\left\{ -C+\nu\right\} 
\]
and $y=\exp\left\{ f_{\bt}(\by_{i})+\nu-\log q(\by_{i})\right\} \geq\exp\left\{ -C+\nu\right\} $,
so both $x$ and $y$ are lower bounded since $\nu\in I$. Similarly
$(x-y)$ is bounded by $C'/m$, where $C'$ is some constant independent
of $\bt$. 

The second term, see \eqref{eq:second_bound}, was bounded by $\exp(2\bar{h})/m^{2}$,
where $\bar{h}$, an upper bound of $h$, may now be replaced by a
constant, since $h(\by):=f_{\bt}(\by)+\nu-\log q(\by)+\log(n)\leq C+\nu+\log(n)$
and $\nu\in I$, again by Assumption (ii). 

The third term is related to the law of large numbers \eqref{eq:LLN}
for random variable $H_{(\bt,\eta)}(\bm{r}_{i}):=\exp\left\{ h(\bm{r}_{i})\right\} $,
which depended implicitly on $\left(\bt,\eta\right)$: 
\[
H_{(\bt,\eta)}(\bm{r}_{i})=\frac{n\exp\left\{ f_{\bt}(\bm{r}_{i})+\nu\right\} }{q(\bm{r}_{i})}.
\]
To obtain (almost surely) uniform convergence, we use the generalised
version of the Glivenko-Cantelli theorem; e.g. Theorem 19.4 p.270
in \citet{VanderVaart2007}. From Example 19.7 of the same book, one
sees that a sufficient condition in our case is that $\Theta$ is
bounded (Assumption (i)), and that 
\[
\left|H_{(\bt,\eta)}(\bm{r})-H_{(\bt',\eta')}(\bm{r})\right|\leq m(\bm{r})\left\Vert \bxi-\bxi'\right\Vert 
\]
for $\bxi=(\bt,\eta)$, $\bxi'=(\bt',\eta')$, and $m$ a function
such that $\E_{q}[m]<\infty$. But
\begin{align*}
\left|H_{(\bt,\eta)}(\bm{r})-H_{(\bt',\eta')}(\bm{r})\right| & =\frac{n\exp\left\{ f_{\bt}(\bm{r})+\nu\right\} }{q(\bm{r})}\left|1-\exp\left\{ f_{\bt}(\bm{r})+\nu-f_{\bt'}(\bm{r})-\nu'\right\} \right|\\
 & \leq ne^{C+\nu}\left|1-\exp\left\{ f_{\bt}(\bm{r})+\nu-f_{\bt'}(\bm{r})-\nu'\right\} \right|\\
 & \leq C'\left\{ \kappa(\bm{r})\left\Vert \bt-\bt'\right\Vert +\left|\nu-\nu'\right|\right\} \\
 & \leq C'\left\{ \kappa(\bm{r})+1\right\} \left\Vert \bxi-\bxi'\right\Vert 
\end{align*}
by Assumption (ii), and for some constant $C'$ independent of $\bt$,
since $\left|1-e^{x}\right|\leq Kx$ for $x,y$ in a bounded set.
One may conclude, since, by Assumption (ii), $\E_{q}[\kappa]<\infty$.
\end{proof}
We are now able to prove Theorem \ref{thm:MLEconverges}. Again, let
$\bxi=(\bt,\nu)$, and rewrite any function of $(\bt,\nu)$ as a function
of $\bxi$, i.e. $\M(\bxi)$, $\mathcal{R}^{m}(\bxi)$. By e.g. Theorem
5.7 p.45 of \citet{VanderVaart2007}, the uniform convergence \ref{eq:uniform_conv}
implies that that the maximiser $\hat{\bxi}^{m}$ of $\mathcal{R}^{m}(\bt,\nu)$
converges to the maximiser $\hat{\bxi}$ of $\M(\bt,\nu)$, provided
that (a) the maximisation is with respect to $\left(\bt,\nu\right)\in\Theta\times I$;
and (b) that $\sup_{d(\bxi,\hat{\bxi})\geq\epsilon}\mathcal{M}(\bxi)<\mathcal{M}(\hat{\bxi})$.
However, by Lemma \ref{lem:bound_nu} one sees that in (a) the same
estimators would be obtained by maximising instead with respect to
$\left(\bt,\nu\right)\in\Theta\times\R$, and (b) is a direct consequence
of Assumption (iv) of the theorem, if one takes for  $d(\bxi,\hat{\bxi})$
the supremum norm of $\bxi-\hat{\bxi}$.

\section{Additional information on the application}

In our application we fit a spatial Markov chain model using logistic
regression. Since the procedure involves the generation of a random
set of reference points, we incur some Monte Carlo error in the estimates.
Estimating the magnitude of the Monte Carlo error is just a matter
of running the procedure several times to look at variability in the
estimates. We did so over 5 repetitions and report the results in
Fig. \ref{fig:replicates-fit}. For each repetition we plot the estimated
smooth effect of saccade angle $r_{\mathrm{ang}}$, along with a 95\%
confidence band. Since smoothing splines are used, smoothing hyperparameters
had to be inferred from the data (using REML, \citealp{Wood:FastStableMaxLikEstSemiParamGLMs}),
and the reported confidence band is \emph{conditional }on the estimated
value of the smoothing hyperparameters. The fits and confidence bands
are extremely stable over independent repetitions. The $R$ command
we used was: 
\begin{quotation}
\textsf{gam(class \textasciitilde{} s(delta,k=10)+s(dcenter,k=40)+s(fxc.prev,fyc.prev,k=40)}

\textsf{+s(angle,bs=\textquotedbl{}cc\textquotedbl{},k=20),data=data,family=binomial,method=''REML'')}
\end{quotation}
\begin{figure}
\begin{centering}
\includegraphics[height=10cm]{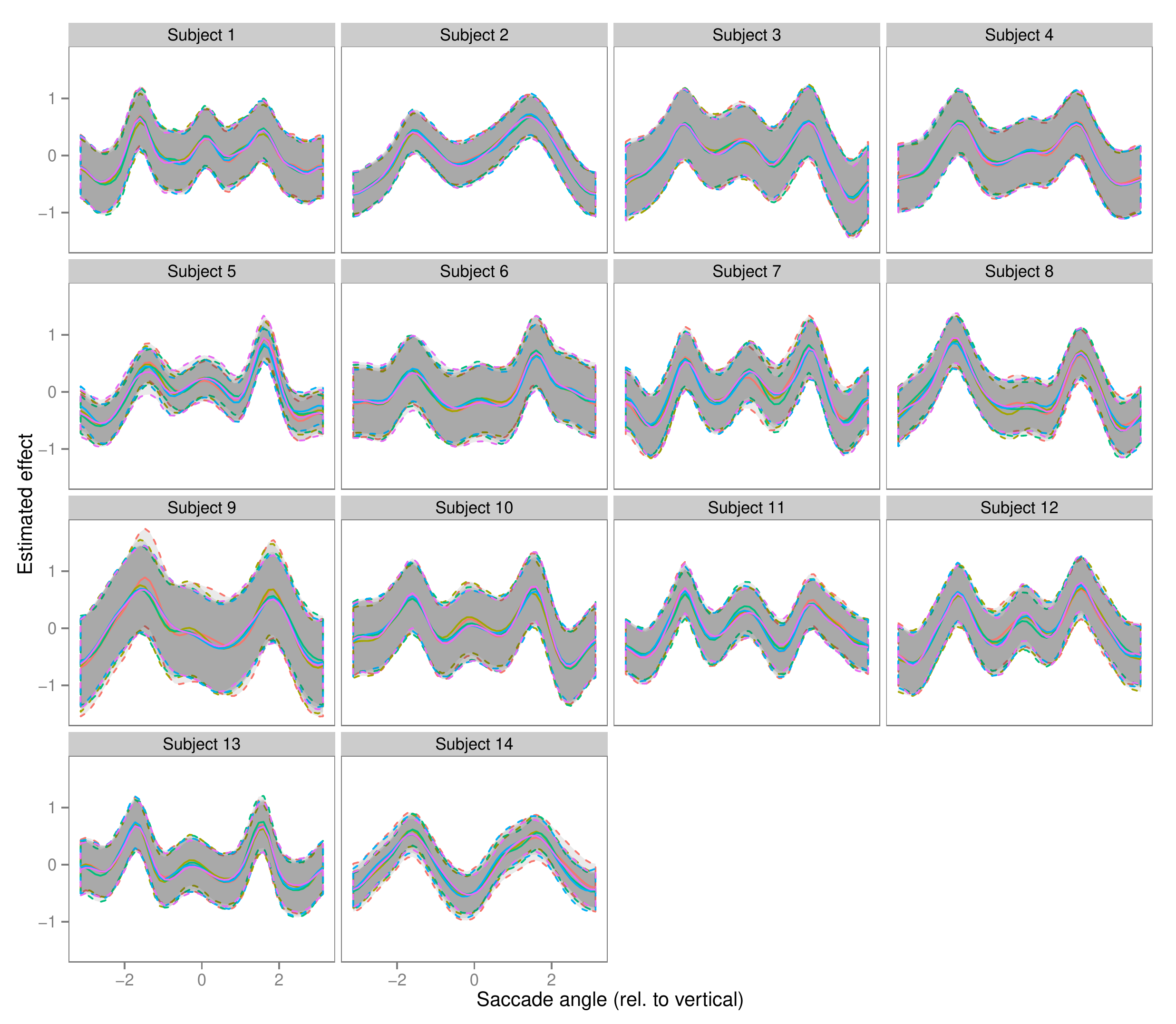}
\par\end{centering}

\protect\caption{Eye movement model: 5 independent replications of the estimates under
different sets of random reference points. We show here the estimated
effect of saccade angle with an associated 95\% pointwise confidence
interval. The 5 replicates are in different colours and overlap each
other almost completely, showing that 20 reference points per true
datapoint are more than enough to produce stable estimates. \label{fig:replicates-fit}}
\end{figure}

\bibliographystyle{apalike}
\bibliography{ref}

\end{document}